\tikzset{>={To[length=2.5pt,width=4pt]}}
\newenvironment{pic}[1][]
{\begin{aligned}\begin{tikzpicture}[font=\tiny,#1]}
{\end{tikzpicture}\end{aligned}}
\def\thickness{0.7pt}
    \gdef\node@@on@layer{%
      \setbox\tikz@tempbox=\hbox\bgroup\pgfonlayer{#1}\unhbox\tikz@tempbox\endpgfonlayer\pgfsetlinewidth{\thickness}\egroup}
\def\node@on@layer{\aftergroup\node@@on@layer}
\tikzstyle{braid}=[double=black, line width=3*\thickness, double distance=\thickness, draw=white, text=black]
\tikzset{every picture/.style={line width=\thickness, draw=black}}
\tikzstyle{pure}=[line width=.7pt]
\tikzstyle{string}=[line width=\thickness]
\tikzstyle{scalar}=[circle, inner sep=0pt, minimum width=15pt, draw, line width=\thickness, fill=white]
\tikzstyle{dot}=[circle, draw=black, fill=black!25, inner sep=.4ex, line width=\thickness, node on layer=foreground]
\tikzstyle{blackdot}=[circle, draw=black, fill=black!75, inner sep=.4ex, line width=\thickness, node on layer=foreground, text=white]
\tikzstyle{whitedot}=[circle, draw=black, fill=white, inner sep=.4ex, line width=\thickness, node on layer=foreground]
\tikzstyle{mixedmorphism}=[morphism, minimum width=30pt, minimum height=16pt, draw, font=\small, inner sep=0pt, fill=white, line width=\thickness,rounded corners=1ex]
\tikzstyle{triangle} = [regular polygon, regular polygon sides=3, draw=black, fill=black!20,scale=0.4, node on layer=foreground]
\tikzstyle{whitetriangle}=[triangle, fill=white]
\tikzstyle{greytriangle}=[triangle, fill=black!20]
\tikzstyle{darkgreytriangle}=[triangle, fill=black!50]
\tikzstyle{blacktriangle}=[triangle, fill=black]
\tikzstyle{invertedtriangle} = [triangle,scale=-1]
\tikzstyle{whiteinvertedtriangle}=[invertedtriangle, fill=white]
\tikzstyle{greyinvertedtriangle}=[invertedtriangle, fill=black!20]
\tikzstyle{darkgreyinvertedtriangle}=[invertedtriangle, fill=black!50]
\tikzstyle{blackinvertedtriangle}=[invertedtriangle, fill=black]
\tikzset{functor1/.style={gray!50}}
\tikzset{functor2/.style={gray!50}}
\tikzstyle{thick}=[line width=\thickness]
\tikzstyle{tiny}=[font=\tiny]
\tikzset{circlelabel/.style={draw, thick, circle, inner sep=-5pt,
 fill=white, minimum width=14pt, fill opacity=1, node on layer=foreground, font=\scriptsize}}
\tikzset{shade 1 transparent/.style={fill=black!20, fill opacity=0.8}}
\tikzset{shade 2 transparent/.style={fill=black!35, fill opacity=0.8}}
\tikzset{shade 3 transparent/.style={fill=black!50, fill opacity=0.8}}
\tikzset{shade 4 transparent/.style={fill=black!65, fill opacity=0.8}}
\tikzset{shade 1/.style={fill=black!16, fill opacity=1}}
\tikzset{shade 2/.style={fill=black!28, fill opacity=1}}
\tikzset{shade 3/.style={fill=black!40, fill opacity=1}}
\tikzset{shade 4/.style={fill=black!52, fill opacity=1}}
\def\strarr{line width=0.7pt, length=4pt, width=5pt, color=black}
\tikzset{arrow/.style={decoration={
    markings,
    mark=at position #1 with \arrow{>[\strarr]}},
    postaction=decorate},
    reverse arrow/.style={decoration={
    markings,
    mark=at position #1 with {{\arrow{<[\strarr]}}}},
    postaction=decorate}
}
\tikzset{overbrace/.style={
     decoration={brace},
     decorate}
}
\tikzset{underbrace/.style={
     decoration={brace, mirror},
     decorate}
}
\newif\ifblack\pgfkeys{/tikz/black/.is if=black}
\newif\ifwedge\pgfkeys{/tikz/wedge/.is if=wedge}
\newif\ifvflip\pgfkeys{/tikz/vflip/.is if=vflip}
\newif\ifhflip\pgfkeys{/tikz/hflip/.is if=hflip}
\newif\ifhvflip\pgfkeys{/tikz/hvflip/.is if=hvflip}
\newif\ifconnectsw\pgfkeys{/tikz/connect sw/.is if=connectsw}
\newif\ifconnectse\pgfkeys{/tikz/connect se/.is if=connectse}
\newif\ifconnectn\pgfkeys{/tikz/connect n/.is if=connectn}
\newif\ifconnects\pgfkeys{/tikz/connect s/.is if=connects}
\newif\ifconnectnw\pgfkeys{/tikz/connect nw/.is if=connectnw}
\newif\ifconnectne\pgfkeys{/tikz/connect ne/.is if=connectne}
\newif\ifconnectnwf\pgfkeys{/tikz/connect nw >/.is if=connectnwf}
\newif\ifconnectnef\pgfkeys{/tikz/connect ne >/.is if=connectnef}
\newif\ifconnectswf\pgfkeys{/tikz/connect sw >/.is if=connectswf}
\newif\ifconnectsef\pgfkeys{/tikz/connect se >/.is if=connectsef}
\newif\ifconnectnf\pgfkeys{/tikz/connect n >/.is if=connectnf}
\newif\ifconnectsf\pgfkeys{/tikz/connect s >/.is if=connectsf}
\newif\ifconnectnwr\pgfkeys{/tikz/connect nw </.is if=connectnwr}
\newif\ifconnectner\pgfkeys{/tikz/connect ne </.is if=connectner}
\newif\ifconnectswr\pgfkeys{/tikz/connect sw </.is if=connectswr}
\newif\ifconnectser\pgfkeys{/tikz/connect se </.is if=connectser}
\newif\ifconnectnr\pgfkeys{/tikz/connect n </.is if=connectnr}
\newif\ifconnectsr\pgfkeys{/tikz/connect s </.is if=connectsr}
\tikzset{keylengthnw/.initial=\connectheight}
\tikzset{keylengthn/.initial =\connectheight}
\tikzset{keylengthne/.initial=\connectheight}
\tikzset{keylengthsw/.initial=\connectheight}
\tikzset{keylengths/.initial =\connectheight}
\tikzset{keylengthse/.initial=\connectheight}
\tikzset{connect nw length/.style={connect nw=true, keylengthnw={#1}}}
\tikzset{connect n length/.style ={connect n =true, keylengthn ={#1}}}
\tikzset{connect ne length/.style={connect ne=true, keylengthne={#1}}}
\tikzset{connect sw length/.style={connect sw=true, keylengthsw={#1}}}
\tikzset{connect s length/.style ={connect s =true, keylengths ={#1}}}
\tikzset{connect se length/.style={connect se=true, keylengthse={#1}}}
\tikzset{connect nw < length/.style={connect nw <=true, keylengthnw={#1}}}
\tikzset{connect n < length/.style ={connect n <=true,  keylengthn ={#1}}}
\tikzset{connect ne < length/.style={connect ne <=true, keylengthne={#1}}}
\tikzset{connect sw < length/.style={connect sw <=true, keylengthnw={#1}}}
\tikzset{connect s < length/.style ={connect s <=true,  keylengths ={#1}}}
\tikzset{connect se < length/.style={connect se <=true, keylengthse={#1}}}
\tikzset{connect nw > length/.style={connect nw >=true, keylengthnw={#1}}}
\tikzset{connect n > length/.style ={connect n >=true,  keylengthn ={#1}}}
\tikzset{connect ne > length/.style={connect ne >=true, keylengthne={#1}}}
\tikzset{connect sw > length/.style={connect sw >=true, keylengthsw={#1}}}
\tikzset{connect s > length/.style ={connect s >=true,  keylengths ={#1}}}
\tikzset{connect se > length/.style={connect se >=true, keylengthse={#1}}}
\newlength\morphismheight
\newlength\wedgewidth
\newlength\minimummorphismwidth
\newlength\stateheight
\newlength\minimumstatewidth
\newlength\connectheight
\tikzset{width/.initial=\minimummorphismwidth}
\tikzset{colour/.initial=white}
  \let\thickness=\pgfmathresult
\tikzset{diredge/.style={decoration={
  markings,
  mark=at position 0.525 with {\arrow{#1}}},postaction={decorate}}}
\tikzset{
    diredge/.default=>
}
\tikzset{diredgestart/.style={decoration={
  markings,
  mark=at position 4pt with {\arrow{#1}}},postaction={decorate}}}
\tikzset{
    diredgestart/.default=<
}
\tikzset{diredgeend/.style={decoration={
  markings,
  mark=at position 1 with {\arrow{#1}}},postaction={decorate}}}
\tikzset{
    diredgeend/.default=>
}
\tikzset{forward arrow style/.style={every to/.style, decoration={
    markings,
    mark=at position 0.5*\pgfdecoratedpathlength+2pt with \arrow{>[\strarr]}},
    postaction=decorate}}
\tikzset{reverse arrow style/.style={every to/.style, decoration={
    markings,
    mark=at position 0.5*\pgfdecoratedpathlength+2pt with \arrow{<[\strarr]}},
    postaction=decorate}}
\tikzset{morphism/.style={morphismshape, node on layer=foreground}}
\tikzset{dashedmorphism/.style={dashedmorphismshape, node on layer=foreground}}
\newcommand{\tinyhandle}[1][dot]{\ensuremath{\smash{\raisebox{-1pt}{\ensuremath{\hspace{-2pt}\begin{pic}[scale=0.33]
        \node (0) at (0,0) {};
        \node[dot, inner sep=1.0pt] (1) at (0,0.3) {};
        \node[dot, inner sep=1.0pt] (2) at (0,0.7) {};
        \node (3) at (0,1) {};
        \draw (0.center) to (1.center);
        \draw (2.center) to (3.center);
        \draw[in=180, out=180, looseness=2] (1.center) to (2.center);
        \draw[in=0, out=0, looseness=2] (1.center) to (2.center);
\end{pic}\hspace{-1pt}}}}}}
\tikzset{arrow/.pic={
    \path[pic actions, decoration={ markings,
      mark=at position 1 with {\arrow{>[\strarr]}}
    },
    postaction={decorate}] (0,1pt) -- (0,2pt);
},double arrow/.pic={
    \path[pic actions, decoration={ markings,
      mark=at position \pgfdecoratedpathlength-3pt with {\arrow{>[\strarr]}},
      mark=at position \pgfdecoratedpathlength with {\arrow{>[\strarr]}}
    },
    postaction={decorate}] (0,-6pt) -- (0,4pt);
},
  reverse arrow/.pic={
    \path[pic actions, decoration={ markings,
      mark=at position 1 with {\arrow{<[\strarr]}}
    },
    postaction={decorate}] (0,1pt) -- (0,2pt);
}
}
\tikzset{surface picture/.style={xscale={0.6}, yscale=0.8, line width=\thickness}}
\tikzset{three dimensional picture/.style={}}
\tikzset{morphismtwocell/.style={morphism, width=0.5cm, connect ne length=0cm, connect se length=0cm, connect nw length=0cm, connect sw length=0cm}}
\tikzset{nmorphismtwocell/.style={draw, minimum width=0.8cm, connect ne length=0cm, connect se length=0cm, connect nw length=0cm, connect sw length=0cm}}
\tikzset{dashback/.style={black!40}}
\tikzset{shade 1 local/.style={shade 1}}
\tikzset{shade 2 local/.style={shade 2}}
\renewcommand{\orcidID}[1]{\href{https://orcid.org/#1}{\hspace{2px}\includegraphics[width=10px,height=10px]{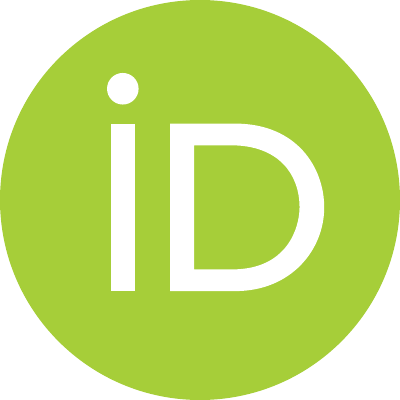}}\hspace{2px}}
\begin{document}
\title{Why quantum state verification cannot be both efficient and secure: a categorical approach}

%
\author{Fabian Wiesner \inst{1}\orcidID{0000-0003-1309-9786}
\and Ziad Chaoui \inst{1} \and
Diana Kessler \inst{2}\orcidID{0009-0006-6355-1581} \and\\ 
Anna Pappa \inst{1}\orcidID{0000-0002-4662-149X} \and
Martti Karvonen\inst{3}\orcidID{0000-0002-8919-343X}}
\authorrunning{F. Wiesner et al.}
\titlerunning{Why quantum state verification cannot be both efficient and secure}

\institute{Technische Universität Berlin \email{\{f.wiesner,ziad.chaoui,anna.pappa\}@tu-berlin.de} \\
 \and
Tallinn University of Technology
\email{diana-maria.kessler@taltech.ee}\and
University College London
\email{martti.karvonen@ucl.ac.uk}}

\maketitle              
\begin{abstract}
The advantage of quantum protocols lies in the inherent properties of the shared quantum states. These states are sometimes provided by sources that are not trusted, and therefore need to be verified. Finding secure and efficient quantum state verification protocols remains a big challenge, and recent works illustrate trade-offs between efficiency and security for different groups of states in restricted settings. However, whether a universal trade-off exists for all quantum states and all verification strategies remains unknown. In this work, we instantiate the categorical composable cryptography framework to show a fundamental limit for quantum state verification for all cut-and-choose approaches used to verify arbitrary quantum states. Our findings show that the prevailing cut-and-choose techniques cannot lead to quantum state verification protocols that are both efficient and secure.

\keywords{Quantum state verification  \and Categorical cryptography \and Security limitations.}
\end{abstract}

\section{Introduction}

    For much of cryptography's history, security has been assumed but not proven. Even today, we rely on protocols without proven security, which are rather based on intuitive arguments \cite{Katz_Lindell}. In the comparably young field of quantum cryptography, many protocols claim provable security under the assumption that the devices used in these protocols are trustworthy. While the protocols offer a real advantage in tackling modern cryptographic challenges \cite{Quantum_crypto_rev,Pirandola:20}, they often come with two caveats:
    \begin{enumerate}
        \item Quantum hardware is expensive and difficult to operate and maintain. This is particularly true for quantum computers or their main building blocks, such as implementations of entangling gates \cite{Preskill_2018}.
        \item The devices might not be trustworthy. To assume otherwise might in fact be a very strong assumption; someone untrusted could be operating the device, or there could be a hardware-based attack that leaks important information, as was done in the past for quantum key distribution systems \cite{lydersen:hacking}.
    \end{enumerate}

   Interestingly, these two issues are connected. Indeed, one way to address the first issue is to delegate some complex tasks to other parties while making sure that these perform the tasks as requested. Quantum correlations provide a way to check that the operations and tasks at hand are executed correctly. In the most general case, this is done through a framework called 'Device-Independence' \cite{Ac_n_2007}, where the parties involved in a protocol can verify that the operations performed are correct, without putting any trust on the hardware. 
    
    In this paper, we focus on one specific task: quantum state verification. In quantum state verification protocols, an untrusted source prepares quantum states and distributes them among the clients who are sometimes considered honest. If the source is honest, it always prepares the target state, i.e. the state the clients desire to hold, and the clients accept the result. However, if the source is dishonest, it might not always send the target state, and the clients should ideally reject it. By virtue of the no-cloning theorem, the clients cannot simply measure and then use the quantum states the source sent. Hence the need to verify that the quantum states are indeed correct. A typical way to verify quantum states is for the source to send several copies of the state, some of which are then measured by the clients. If enough measurements correspond to the expected the states,the clients are convinced that the source is honest. They can then use the states they did not measure for further tasks. Indeed, some protocols use quantum state verification as a subroutine, for example when the clients don't have the local resources to create the state or the network resources to distribute it \cite{dejong:anonymousCKA}. This modular use of quantum state verification begs for composable security but until recently, it wasn't clear if a quantum state verification protocol could be composably secure, especially if the clients are not trusted. Then, in \cite{Yehia_2021}, the authors showed composable security for the protocol in \cite{Pappa_2012} but only against a dishonest source. Following a different approach for the clients, the authors of \cite{colisson2024graph} demonstrated that stand-alone security implies composable security for many target states.\\ 
    In this work we provide a no-go result showing that a quantum state verification protocol cannot be composably secure and efficient at the same time. We use the novel framework of categorical composable cryptography \cite{broadbentkarvonen:categoricalcrypto,broadbentkarvonen:categoricalcryptoextended} to prove this result. 
    The motivation for using this framework lies in its combination of rigor and flexibility. On the one hand, modeling quantum processes and protocols as morphisms in a category provides a precise, albeit abstract, machine model which, by design, prevents mistakes and hidden assumptions. On the other hand, the ability to define attack models in the framework formally but still freely allows for the flexibility to investigate more complex adversarial settings such as 'honest-but-curious' or notions of i.i.d.-restrictions. Although we use a rather general attack model for our result, we will see that by the nature of the actual attacks, one could use more restrictive attack models as well. However, we stress that we believe that our results could be proven with essentially the same proofs in other frameworks for composably secure quantum cryptography~\cite{Unruh10,MauRen11,Mau11}.

\subsection{Our contribution and related work}
Many protocols implement quantum state verification for different types of states, e.g. \cite{hypergraph,many_qubit,Pappa_2012,Unnikrishnan_2022}. However, all protocols suffer from the same efficiency vs. security trade-off: a quantum state verification protocol cannot be secure and efficient. We investigate this trade-off in a general setting and find fundamental limitations for quantum state verification.
\begin{theorem}[Main result (informal)]
    Let $\pi$ be a protocol for quantum state verification with the following properties:
    \begin{itemize}
        \item the clients cannot prepare the target state, and
        \item if the clients output a state received from the source, they perform no map on it.
    \end{itemize}
    At least one of the following statements about $\pi$ with security parameter $\lambda$ is false:
    \begin{enumerate}
        \item $\pi$ rejects the target state with a probability negligible in $\lambda$.
        \item If the source is dishonest, either the probability to accept or the distance to the target state is negligible in $\lambda$.
        \item The number of rounds $N$ is polynomial in $\lambda$.
    \end{enumerate}
    Moreover, we find with $\varepsilon_H$ being the distinguishability to the idealized process if the source is honest and $\varepsilon_D$ if it is not
    \begin{align*}
        \varepsilon_H+\varepsilon_D\geq \begin{cases}
            \nicefrac{1}{8\sqrt{N}}\text{ if the target state is pure.}\\
            \nicefrac{1}{27N}\text{ if the target state is mixed.}\\
        \end{cases}
    \end{align*}
\end{theorem}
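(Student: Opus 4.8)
The plan is to exhibit, for any protocol satisfying the two structural hypotheses, an explicit dishonest-source attack whose success forces the sum $\varepsilon_H+\varepsilon_D$ to be bounded below. The single fact I would lean on throughout is that the idealized verification resource has the defining property that \emph{whenever it accepts, the clients hold exactly the target state}. Combined with the hypothesis that an output is a source register passed through with no client map, and that the clients cannot themselves prepare the target, this pins down what the real protocol is allowed to output on an accepting run, and it is precisely the gap between ``accepts'' and ``outputs the target'' that the attack will exploit.

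First I would record a reduction with two ingredients. Let the target be $\rho$ (pure $\rho=|\psi\rangle\langle\psi|$, or mixed), and let a dishonest source transmit some global state in place of the honest $\rho^{\otimes N}$. Writing $a$ for the probability that the real protocol accepts under the attack and $\sigma$ for the reduced state carried by the designated output register, a short block-diagonal trace-distance computation — the accept flag is classical, and on the accepting branch the ideal resource is \emph{forced} to output $\rho$, while a simulator may only tune the accept probability and the rejecting branch — shows, uniformly over all simulators, that $\varepsilon_D\ge a\cdot\tfrac12\|\sigma-\rho\|_1$. Independently, the accept flag is obtained by the clients' arbitrary (possibly adaptive, entangling) operations on the received state, so by the data-processing inequality $a\ge a_{\mathrm{hon}}-\tfrac12\|\text{received}-\rho^{\otimes N}\|_1$, and completeness against the honest source gives $a_{\mathrm{hon}}\ge 1-\varepsilon_H$.

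For the \textbf{pure} case I would send $|\phi\rangle^{\otimes N}$ for a single slightly rotated vector with $|\langle\psi|\phi\rangle|^2=1-\eta$. As the transmitted state is a product, measuring the tested registers cannot disturb the untouched output register, so $\sigma=|\phi\rangle\langle\phi|$ on both branches and $\tfrac12\|\sigma-\rho\|_1=\sqrt{\eta}$, whereas the exact overlap gives $\tfrac12\big\|\,|\phi\rangle^{\otimes N}-|\psi\rangle^{\otimes N}\big\|_1=\sqrt{1-(1-\eta)^N}\le\sqrt{N\eta}$. The reduction then reads
\[
\varepsilon_D\;\ge\;\big((1-\varepsilon_H)-\sqrt{N\eta}\big)\,\sqrt{\eta}.
\]
Putting $t=\sqrt{N\eta}$ and maximising $t(1-\varepsilon_H)-t^2$ at $t=(1-\varepsilon_H)/2$ — after disposing of the trivial case $\varepsilon_H\ge 1/(8\sqrt N)$, so that $1-\varepsilon_H\ge 1/\sqrt 2$ — yields $\varepsilon_H+\varepsilon_D\ge 1/(8\sqrt N)$. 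For the \textbf{mixed} case the same product attack $\sigma^{\otimes N}$ applies, but the exact pure-state overlap is unavailable; one has only the subadditive estimate $\tfrac12\|\sigma^{\otimes N}-\rho^{\otimes N}\|_1\le N\cdot\tfrac12\|\sigma-\rho\|_1$, which is \emph{linear} rather than square-root in the per-copy perturbation $\delta=\tfrac12\|\sigma-\rho\|_1$. The bound becomes $\varepsilon_D\ge\big((1-\varepsilon_H)-N\delta\big)\delta$, and optimising over $\delta$ produces the $1/N$ scaling; tracking the constants through the optimisation gives $\varepsilon_H+\varepsilon_D\ge 1/(27N)$. (An equivalent route is the single-register replacement attack, placing a state orthogonal to $\rho$ in one uniformly random slot, which contributes to ``accept and wrong output'' only when that slot is the unmeasured output register, an event of probability $1/N$.)

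I expect the main obstacle to be the second ingredient of the reduction: controlling the acceptance probability $a$ against \emph{arbitrary} honest-client test strategies while keeping the output-register state pinned to $\sigma$. For the product attacks this decoupling is immediate, but it rests squarely on the hypothesis that clients apply no map to the output; for the single-register variant it additionally requires the placement-averaging argument to confirm that the entire ``accept with a far output'' mass is carried by the coincidence of the corrupted slot with the output slot. Casting this in the categorical composable framework — especially verifying that the ideal resource genuinely forbids accepting with a wrong output, so that the lower bound on $\varepsilon_D$ holds against \emph{every} simulator rather than a chosen one — is the part where the bookkeeping must be done with care.
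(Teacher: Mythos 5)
Your proposal is correct and follows essentially the same strategy as the paper's own proof: an i.i.d.\ product attack with a state at distance $\Theta(1/\sqrt{N})$ (pure case) or $\Theta(1/N)$ (mixed case) from the target, the key structural observation that any simulator can only tune the accept probability while the ideal resource outputs the target exactly upon acceptance, and the completeness--soundness trade-off obtained by data processing on the multi-copy trace distance. The differences are only in execution --- you bound the mixed-case multi-copy distance by telescoping subadditivity where the paper uses its Fuchs--van de Graaf lemma, and your block-diagonal trace-distance argument replaces the paper's explicit Holevo--Helstrom measurements, which in fact yields marginally better constants than the stated $\nicefrac{1}{8\sqrt{N}}$ and $\nicefrac{1}{27N}$.
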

This trade-off has been proven in other works before, albeit in more restrictive settings. In both \cite{Bound1} and \cite{Bound2}, the authors showed that for a quantum verification protocol for pure target states with a fixed number of rounds, the worst-case infidelity ($1-Fid.$) scales with the inverse of number of rounds. Although, in \cite{Bound1} the authors argue that this is not a restriction, both assume that the clients perform single round tests, i.e. do not use collective measurements. 
However, our work differs from \cite{Bound1,Bound2} in many aspects. First, the assumptions differ: we do not consider a fixed number of rounds, we allow for collective measurements, and, very importantly, we derive a bound for mixed states as well.
To illustrate the importance of the latter, consider quantum state verification for a pure target state $\ketbra{\phi}$. One can then circumvent the results from \cite{Bound1,Bound2} by using a protocol with a target state $(1-f(N))\ketbra{\phi} + f(N) \ketbra{\phi^{\bot}}$, where $f$ can be even negligible in the number of rounds $N$. Our result closes such loopholes.
Further, the perspectives on the topic are different. In \cite{Bound1,Bound2}, the authors utilize the hypothesis testing framework\footnote{See \cite{QSV-review} for a review on quantum state verification focused on the hypothesis testing approach.}, which is useful for quantum state verification but is not common in other areas of quantum cryptography. We argue that quantum state verification should be viewed as a building block of larger protocols and hence use categorical composable cryptography. Because of this difference, we developed a novel proof technique which we expect to also be adaptable to other settings.\\[1em]
Our results provide bounds for self-testing as well \cite{Self-testing-review}. Self-testing is slightly different from quantum state verification, since there is a single client that does not trust any of their devices, including preparation and measurement apparatus. Self-testing can therefore be seen as a stricter case of quantum state verification. Hence, any attack on quantum state verification implies an attack on self-testing.

\subsection{Structure}
Our work is structured as follows. In section 2, we first present results from quantum information theory that we need for our security analysis. We then provide a gentle introduction to category theory and our category $\CPTP$ alongside morphisms used to express the algorithms in later sections. We conclude section 2 with the $\ncomb$ construction necessary to define the resource theories for the subsequent work. We outline the resource theory that we work with in section 3, guided by \cite{broadbentkarvonen:categoricalcrypto,broadbentkarvonen:categoricalcryptoextended}. We then give a formal security definition and a description of our ideal resource before presenting our main results in section 4. In section 4, we present our no-go result first for a simple type of protocols and then for general quantum state verification protocols. In both cases, we prove the no-go result in the single- and multi-client cases. Finally, we discuss open questions and possible implications of our work in section 5.

\section{Preliminaries}
\subsection{Quantum Information Theory} \label{sec:QITPrelim}
Before instantiating the categorical framework we need for our security analysis, we present some preliminaries on quantum information theory. All results in this section are taken from~\cite{watrous_2018}. 

In the following, we write our definitions with respect to density operators and quantum channels, although they hold for general linear operators and linear maps. A density operator on a space $\X$ is a positive semidefinite operator with trace equal to one. $D(\X)$ is the space of density operators. A quantum channel from $\X$ to $\Y$ is a completely positive trace preserving map from the space of linear operators on $\X$, $\L(\X)$ to $\L(\Y)$. $\C(\X,\Y)$ denotes the space of quantum channels from $\X$ to $\Y$.

\begin{definition}[Trace norm, diamond norm, diamond distance]
    For a density operator $\rho\in D(\X)$, we define the trace norm to be
$\|\rho\|_1=\Tr(\sqrt{\rho\rho\D}).$ The induced trace norm of a quantum channel $\Phi\in \C(\X,\Y)$ is then
$\|\Phi\|_1=\max\{|\Phi(\rho)|_1:\rho\in D(\X)\}$. The diamond norm of $\Phi$ is then defined as
\begin{equation*}
    \|\Phi\|_\diamond=\|\Phi\otimes \id[L(\X)]\|_1.
\end{equation*}
Finally we define the diamond distance between $\Phi,\Psi\in \C(\X,\Y)$ as
\begin{equation}\label{def:diamond_distance}
    \dia(\Phi,\Psi)=\|\Phi-\Psi\|_\diamond.
\end{equation}
\end{definition}

Two properties of the diamond distance that hold for CPTP maps $\Phi_0,\Phi_1,\Psi_0,\Psi_1\in \C(\X,\Y)$ and any space $\Z$ are
\begin{align}
    &\dia\left(\Psi_1\Psi_0,\Phi_1\Phi_0\right)\leq \dia(\Psi_1,\Phi_1)+\dia(\Psi_0,\Phi_0)\label{eq:diamond_distance_composition},\\
    &\dia\left(\Phi\otimes\id[L(\Z)],\ \Psi\otimes\id[L(\Z)]\right)\leq \dia(\Phi,\Psi).\label{eq:diamond_dist_with_id}
\end{align}

The distances above also have an operational interpretation. Indeed the trace distance yields a bound on the achievable distinguishing advantage between two density operators given by the Holevo-Helstrom Theorem.

\begin{theorem}[Holevo-Helstrom Theorem]
Let $\rho_0, \rho_1\in D(\X)$ be density operators, and let $\lambda\in [0, 1]$. For any measurement $\mu:\{0, 1\}\rightarrow Pos(\X)$ (where $Pos(\X)$ denotes all positive definite operators on $\X$) it then holds
\begin{equation}\label{eq:holevo_helstrom}
    \lambda \braket{\mu(0)}{\rho_0}+ (1-\lambda) \braket{\mu(1)}{\rho_1}\leq \frac{1}{2}+\frac{1}{2}\|\lambda\rho_0-(1-\lambda)\rho_1\|_1.
\end{equation}
Moreover there exists a projective measurement $\mu:\{0, 1\}\rightarrow Pos(\X)$ for which equality is achieved in (\ref{eq:holevo_helstrom}).
\end{theorem}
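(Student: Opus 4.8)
\emph{Proof idea.} The plan is to collapse the two-outcome optimisation onto a single positive operator and then read the answer off from the spectral decomposition of a Hermitian operator. Throughout, recall that $\braket{A}{\rho}=\Tr(A\rho)$ whenever $A$ is Hermitian (as both $\mu(0)$ and $\mu(1)$ are), and that a measurement obeys the completeness relation $\mu(0)+\mu(1)=\id[\X]$ (the identity operator on $\X$).

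First I would eliminate $\mu(1)$ using completeness. Substituting $\mu(1)=\id[\X]-\mu(0)$ and using $\Tr(\rho_1)=1$ turns the left-hand side of (\ref{eq:holevo_helstrom}) into
\begin{equation*}
    \lambda\braket{\mu(0)}{\rho_0}+(1-\lambda)\braket{\mu(1)}{\rho_1}=(1-\lambda)+\Tr(\mu(0)X),\qquad X:=\lambda\rho_0-(1-\lambda)\rho_1 .
\end{equation*}
The whole problem thus reduces to maximising $\Tr(\mu(0)X)$ over operators satisfying $0\leq\mu(0)\leq\id[\X]$, which strips away the prior weights and leaves a clean linear-algebraic question about $X$.

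Next I would split the Hermitian operator $X$ into positive and negative parts, $X=P-Q$ with $P,Q\geq 0$ supported on orthogonal subspaces; this Jordan--Hahn decomposition comes from the spectral theorem and yields $\|X\|_1=\Tr(P)+\Tr(Q)$. Since $X\leq P$ and $0\leq\mu(0)\leq\id[\X]$, monotonicity of the trace on positive operators gives $\Tr(\mu(0)X)\leq\Tr(\mu(0)P)\leq\Tr(P)$. Feeding the trace identity $\Tr(X)=\Tr(P)-\Tr(Q)=2\lambda-1$ into $\|X\|_1$ produces $\Tr(P)=\tfrac12(\|X\|_1+2\lambda-1)$, whereupon the additive constant $(1-\lambda)$ cancels exactly and the upper bound becomes $\tfrac12+\tfrac12\|X\|_1$, proving (\ref{eq:holevo_helstrom}).

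For the sharpness claim I would simply exhibit the optimiser: take $\mu(0)=\Pi$ to be the orthogonal projection onto the support of $P$ and $\mu(1)=\id[\X]-\Pi$, which is a genuine projective measurement. Because $\Pi P=P$ and $\Pi Q=0$ by orthogonality of the supports, one gets $\Tr(\Pi X)=\Tr(P)$, so every inequality above is saturated and equality holds in (\ref{eq:holevo_helstrom}). I expect no serious obstacle here: the only care needed is purely linear-algebraic, namely justifying $\|X\|_1=\Tr(P)+\Tr(Q)$ from the decomposition, checking the operator inequalities $X\leq P$ and $\mu(0)\leq\id[\X]$, and tracking the constant $(1-\lambda)$ so that it recombines cleanly into $\tfrac12+\tfrac12\|X\|_1$.
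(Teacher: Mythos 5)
Your proof is correct and complete. Note that the paper itself does not prove this theorem at all---it is quoted in the preliminaries as a standard result from the cited reference (Watrous)---so there is no in-paper argument to compare against; the relevant comparison is with the textbook proof the paper points to. Your route (eliminate $\mu(1)$ via completeness, reduce everything to maximising $\Tr(\mu(0)X)$ for $X=\lambda\rho_0-(1-\lambda)\rho_1$ over $0\leq\mu(0)\leq\id[\X]$, then bound this by $\Tr(P)$ using the Jordan--Hahn decomposition $X=P-Q$) is one of the two standard arguments, and every step checks out: the two trace inequalities follow from positivity of $\Tr(AB)$ for $A,B\geq 0$, and the bookkeeping $\Tr(P)=\tfrac12\left(\|X\|_1+2\lambda-1\right)$ indeed makes the constant $(1-\lambda)$ recombine into $\tfrac12+\tfrac12\|X\|_1$. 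The reference's proof is slightly different in packaging: it symmetrises the left-hand side as $\tfrac12+\tfrac12\braket{\mu(0)-\mu(1)}{X}$ and then invokes trace-norm/operator-norm duality together with $-\id[\X]\leq\mu(0)-\mu(1)\leq\id[\X]$ to get the inequality in one line, obtaining the equality case, exactly as you do, from the projective measurement onto the support of the positive part of $X$. The two arguments are essentially equivalent---your explicit computation of $\Tr(P)$ is what the duality step packages abstractly---and yours has the merit of being self-contained, relying only on the spectral theorem rather than on norm duality.
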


To see that this actually gives a bound on the distinguishing advantage we set $\lambda=\frac{1}{2}$ in (\ref{eq:holevo_helstrom}) and we obtain
\begin{align}
        &\frac{1}{2} \braket{\mu(0)}{\rho_0}+  \frac{1}{2} \braket{\mu(1)}{\rho_1}\leq \frac{1}{2}+ \frac{1}{4}\|\rho_0-\rho_1\|_1\\
        \Leftrightarrow& \braket{\mu(1)}{\rho_1} + (\braket{\mu(0)}{\rho_0}-1) = \braket{\mu(1)}{\rho_1} -\braket{\mu(1)}{\rho_0}\leq \frac{1}{2}\|\rho_0-\rho_1\|_1.\label{eq:dist_adv}
\end{align}
Using (\ref{eq:dist_adv}) with an adequate choice of measurement $\mu$ that satisfies $\braket{\mu(0)}{\rho_0}\geq\frac{1}{2}$ and $\braket{\mu(1)}{\rho_1}\geq\frac{1}{2}$, we find the distinguishing advantage. 

Another important quantity is the fidelity. The fidelity between two density operators $\rho_0$, $\rho_1$ is given by
\begin{equation*}
F(\rho_0, \rho_1)=\Tr\left(\sqrt{\sqrt{\rho_0}\rho_1\sqrt{\rho_0}}\right)^2.
\end{equation*}

The Fuchs-van de Graaf inequalities link the trace distance to the fidelity~\cite{watrous_2018}. 
\begin{theorem}[Fuchs-van de Graaf Inequalities] Let $\rho_0, \rho_1\in D(\X)$ be density operators, it  holds that
\begin{align}
   &1-\sqrt{F(\rho_0, \rho_1)}\leq \frac{1}{2}\|\rho_0-\rho_1\|_1\leq
\sqrt{1-F(\rho_0, \rho_1)}\label{eq:FvdG_1}\\
&\left(1-\frac{1}{2}\|\rho_0-\rho_1\|_1\right)^2\leq F(\rho_0, \rho_1)\leq 1-\frac{1}{2}\|\rho_0-\rho_1\|_1^2\label{eq:FvdG_2}
\end{align}
\end{theorem}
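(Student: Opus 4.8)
The plan is to prove the two displayed chains (\ref{eq:FvdG_1}) and (\ref{eq:FvdG_2}) by first noting they are algebraically equivalent and then establishing (\ref{eq:FvdG_1}) directly. Write $T=\frac{1}{2}\|\rho_0-\rho_1\|_1$ and $f=\sqrt{F(\rho_0,\rho_1)}$ for the trace distance and the (linear) fidelity, both in $[0,1]$. Since $1-f\le T \iff f^2\ge (1-T)^2$ and $T\le\sqrt{1-f^2}\iff f^2\le 1-T^2$, the inequalities (\ref{eq:FvdG_2}) are exactly (\ref{eq:FvdG_1}) rewritten, so it suffices to show $1-f\le T\le\sqrt{1-f^2}$.

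For the base case I would compute both quantities explicitly for pure states $\rho_i=\ketbra{\psi_i}$. Diagonalizing the rank-$\le 2$ Hermitian operator $\rho_0-\rho_1$ on the span of $\ket{\psi_0},\ket{\psi_1}$ gives eigenvalues $\pm\sqrt{1-|\braket{\psi_0}{\psi_1}|^2}$, whence $T=\sqrt{1-|\braket{\psi_0}{\psi_1}|^2}$, while $F=|\braket{\psi_0}{\psi_1}|^2=f^2$ straight from the definition. So pure states meet the upper bound with equality, $T=\sqrt{1-f^2}$, and the lower bound too, since $\sqrt{1-f^2}=\sqrt{(1-f)(1+f)}\ge 1-f$ for $f\in[0,1]$.

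The general upper bound then follows from Uhlmann's theorem, $F(\rho_0,\rho_1)=\max|\braket{\psi_0}{\psi_1}|^2$ over purifications $\ket{\psi_i}$ on a common enlarged space, together with the fact that partial trace is CPTP and the trace norm is non-increasing under CPTP maps. Choosing purifications attaining the maximum and tracing out the auxiliary system gives $T\le\frac{1}{2}\|\ketbra{\psi_0}-\ketbra{\psi_1}\|_1=\sqrt{1-|\braket{\psi_0}{\psi_1}|^2}=\sqrt{1-f^2}$ by the pure-state computation. For the lower bound I would instead pass to classical distributions: choose a measurement $\mu$ attaining the fidelity, i.e. $\sum_x\sqrt{\braket{\mu(x)}{\rho_0}\braket{\mu(x)}{\rho_1}}=f$ (the fidelity is the minimum of this classical quantity over all measurements). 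Setting $p_x=\braket{\mu(x)}{\rho_0}$ and $q_x=\braket{\mu(x)}{\rho_1}$, the pointwise bound $\sqrt{p_x}+\sqrt{q_x}\ge|\sqrt{p_x}-\sqrt{q_x}|$ yields $\frac{1}{2}\sum_x|p_x-q_x|\ge\frac{1}{2}\sum_x(\sqrt{p_x}-\sqrt{q_x})^2=1-\sum_x\sqrt{p_xq_x}=1-f$; since measuring is CPTP, $\frac{1}{2}\sum_x|p_x-q_x|\le T$, so $1-f\le T$.

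The explicit pure-state diagonalization and the classical inequality are routine; the real weight is carried by two standard structural facts about fidelity from \cite{watrous_2018} — Uhlmann's theorem for the upper bound and the characterization of fidelity as the minimum over measurements of the classical fidelity for the lower bound. \emph{The main obstacle} is organizing the argument so that these facts reduce the mixed-state statement to the explicit pure-state and classical computations, being careful that every reduction (partial trace for the upper bound, measurement for the lower bound) proceeds through a CPTP map under which the relevant quantity is monotone in the direction actually needed — the fidelity-optimal measurement, not the distinguishing-optimal one of Holevo–Helstrom, is what makes the lower bound chain close.
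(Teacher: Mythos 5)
Your proof is correct, but it cannot be compared to a proof in the paper for a simple reason: the paper does not prove this theorem at all. The statement sits in the preliminaries, which open by declaring that all results of that section are taken from~\cite{watrous_2018}, so the paper's ``proof'' is a citation. What you have written is a correct unfolding of that citation along the standard textbook route: the algebraic equivalence of \eqref{eq:FvdG_1} and \eqref{eq:FvdG_2} (legitimate since $\frac{1}{2}\|\rho_0-\rho_1\|_1\leq 1$, so both squarings preserve the inequalities); the exact pure-state computation $\frac{1}{2}\|\rho_0-\rho_1\|_1=\sqrt{1-|\braket{\psi_0}{\psi_1}|^2}$ and $F=|\braket{\psi_0}{\psi_1}|^2$; Uhlmann's theorem plus contractivity of the trace norm under the partial trace for the upper bound; and the Fuchs--Caves characterization of fidelity as a minimum over measurements, combined with the classical estimate $\frac{1}{2}\sum_x|p_x-q_x|\geq 1-\sum_x\sqrt{p_xq_x}$ and contractivity again, for the lower bound. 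Two remarks. First, you correctly identify the one place where such an argument typically goes wrong: the lower bound needs the fidelity-optimal measurement, not the Helstrom (distinguishing-optimal) one, since only for the former does the classical fidelity equal $\sqrt{F(\rho_0,\rho_1)}$ while the classical trace distance is still dominated by the quantum one. Second, your reading of the right-hand inequality of \eqref{eq:FvdG_2} as $F(\rho_0,\rho_1)\leq 1-\left(\frac{1}{2}\|\rho_0-\rho_1\|_1\right)^2$ is the only tenable one: parsed literally as $1-\frac{1}{2}\left(\|\rho_0-\rho_1\|_1\right)^2$ the claim is false (for orthogonal pure states the right-hand side is $-1$), so the paper's notation there is a typo that your equivalence step silently, and correctly, repairs; the same reading must be applied where this expression recurs in Lemma~\ref{lemma:Multi_copy_dist}.
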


\begin{lemma}[Bounds for multi copy distinction] \label{lemma:Multi_copy_dist}For $\rho_0, \rho_1\in D(\X)$ we have
\begin{equation}\label{eq:multi_copy_dist}
        1-\left(\sqrt{1-\frac{1}{2}\|\rho_0-\rho_1\|_1^2}\right)^n\leq\frac{1}{2}\|\rho_0^{\otimes n}-\rho_1^{\otimes n}\|_1\leq \sqrt{1 - (1-\|\rho_0-\rho_1\|_1)^{n}}.
    \end{equation}
\end{lemma}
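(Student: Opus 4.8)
The plan is to reduce everything to the fidelity $F := F(\rho_0,\rho_1)$ and to exploit its multiplicativity under tensor powers, so that the two Fuchs--van de Graaf inequalities (\ref{eq:FvdG_1})--(\ref{eq:FvdG_2}) do all the work. The one fact I would establish first, and which is the genuine input, is that the fidelity is multiplicative, i.e. $F(\rho_0^{\otimes n},\rho_1^{\otimes n}) = F(\rho_0,\rho_1)^n$. With the (squared) convention used here this follows from $\sqrt{\rho^{\otimes n}} = (\sqrt{\rho})^{\otimes n}$ together with the multiplicativity of the trace norm on tensor products, $\|A^{\otimes n}\|_1 = \|A\|_1^n$, since $F(\rho,\sigma) = \|\sqrt{\rho}\sqrt{\sigma}\|_1^2$. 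Everything else is then a two-step squeeze: first convert the $n$-copy trace distance into $F^n$ via (\ref{eq:FvdG_1}), then bound the single-copy $F$ in terms of $\|\rho_0-\rho_1\|_1$ via (\ref{eq:FvdG_2}).

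For the lower bound I would apply the left inequality of (\ref{eq:FvdG_1}) to the pair $\rho_0^{\otimes n},\rho_1^{\otimes n}$, giving $1 - \sqrt{F(\rho_0^{\otimes n},\rho_1^{\otimes n})} \le \tfrac12\|\rho_0^{\otimes n}-\rho_1^{\otimes n}\|_1$, which by multiplicativity reads $1 - F^{n/2} \le \tfrac12\|\rho_0^{\otimes n}-\rho_1^{\otimes n}\|_1$. It then remains to upper bound $F^{n/2}$, for which the right inequality of (\ref{eq:FvdG_2}) supplies $F \le 1 - \tfrac12\|\rho_0-\rho_1\|_1^2$; raising both sides to the power $n/2$ yields $F^{n/2} \le \big(\sqrt{1-\tfrac12\|\rho_0-\rho_1\|_1^2}\big)^n$, which is exactly the claimed left-hand bound.

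For the upper bound I would run the same argument through the right inequality of (\ref{eq:FvdG_1}), obtaining $\tfrac12\|\rho_0^{\otimes n}-\rho_1^{\otimes n}\|_1 \le \sqrt{1 - F^n}$. Now I need a lower bound on $F^n$: the left inequality of (\ref{eq:FvdG_2}) gives $F \ge \big(1-\tfrac12\|\rho_0-\rho_1\|_1\big)^2 \ge 1 - \|\rho_0-\rho_1\|_1$, and raising to the $n$-th power gives $F^n \ge (1-\|\rho_0-\rho_1\|_1)^n$, whence $\tfrac12\|\rho_0^{\otimes n}-\rho_1^{\otimes n}\|_1 \le \sqrt{1-(1-\|\rho_0-\rho_1\|_1)^n}$, as claimed.

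The algebra is routine; the two places that actually need care are the multiplicativity claim (which I would prove as above, since it is the only nontrivial ingredient) and the monotonicity step of raising the scalar inequalities to the power $n$ (resp.\ $n/2$). The latter is valid whenever the bases lie in $[0,1]$: for the upper bound this requires $1-\|\rho_0-\rho_1\|_1 \ge 0$, i.e. $\|\rho_0-\rho_1\|_1 \le 1$, which is precisely the regime of nearly indistinguishable states relevant to the application. I would either restrict to this regime or, if a globally valid statement is wanted, keep the tighter intermediate bound $F \ge \big(1-\tfrac12\|\rho_0-\rho_1\|_1\big)^2$ whose base always lies in $[0,1]$. This monotonicity bookkeeping is the only real obstacle.
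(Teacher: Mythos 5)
Your proof is correct and follows essentially the same route as the paper's: multiplicativity of the fidelity under tensor powers, $F(\rho_0^{\otimes n},\rho_1^{\otimes n})=F(\rho_0,\rho_1)^n$, combined with the two Fuchs--van de Graaf inequalities, applied once to the $n$-copy pair and once to the single-copy pair, with the same final estimate $\left(1-\tfrac12\|\rho_0-\rho_1\|_1\right)^2\geq 1-\|\rho_0-\rho_1\|_1$. The only differences are that you actually prove the multiplicativity identity (the paper just invokes it), and your caveat about raising the possibly negative base $1-\|\rho_0-\rho_1\|_1$ to the $n$-th power is well taken: the paper's last step silently assumes $\|\rho_0-\rho_1\|_1\leq 1$, which is indeed the only regime in which the lemma is later used.
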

\begin{proof}
An important property of the fidelity function is
    \begin{equation}\label{eq:fidelity_n}
        F(\rho_0^{\otimes n}, \rho_1^{\otimes n})=F(\rho_0, \rho_1)^n.
    \end{equation}
    Using this and the Fuchs-van de Graaf inequalities we first derive the left hand side of (\ref{eq:multi_copy_dist})
    \begin{align*}
        &1-\left(\sqrt{1-\frac{1}{2}\|\rho_0-\rho_1\|_1^2}\right)^n\stackrel{(\ref{eq:FvdG_1})}{\leq} 1-\sqrt{F(\rho_0, \rho_1)}^n\\
        \stackrel{(\ref{eq:fidelity_n})}{=}&1-\sqrt{F(\rho_0^{\otimes n}, \rho_1^{\otimes n})}
        \stackrel{(\ref{eq:FvdG_1})}{\leq}\frac{1}{2}\|\rho_0^{\otimes n}-\rho_1^{\otimes n}\|_1.
    \end{align*}
   For the right hand side we have
    \begin{align*}
        &\frac{1}{2}\|\rho_0^{\otimes n}-\rho_1^{\otimes n}\|_1\stackrel{(\ref{eq:FvdG_1})}{\leq}\sqrt{1-F(\rho_0^{\otimes n}, \rho_1^{\otimes n})}\\
        \stackrel{(\ref{eq:fidelity_n})}{=}&\sqrt{1-F(\rho_0, \rho_1)^n}\stackrel{(\ref{eq:FvdG_2})}{\leq}\sqrt{1 - (1-\frac{1}{2}\|\rho_0-\rho_1\|_1)^{2n}}\\
        &\leq\sqrt{1 - (1-\|\rho_0-\rho_1\|_1)^{n}}.
    \end{align*}\qed
\end{proof}
 For pure states however, it holds that \begin{equation*}
     \|\ketbra{\psi}-\ketbra{\phi}\|_1 = 2\sqrt{1-|\braket{\psi}{\phi}|^2},
 \end{equation*}which implies
    \begin{align}\label{eq:pureState}
        \frac{1}{2}\|\rho_0^{\otimes n}-\rho_1^{\otimes n}\|_1 = \sqrt{1-|\braket{\psi}{\phi}|^{2n}}.
    \end{align}

\subsection{Category Theory}


To introduce the framework of categorical composable cryptography, we need to introduce some notions about category theory. Informally, a category is a collection of objects - usually denoted $A, B, C, \ldots$ - and morphisms - $f, g, h, \ldots$ - between objects.
Whenever we have two morphisms $f : A \to B$, $g : B \to C$ such that the domain of $g$ and the codomain of $f$ coincide, we can compose them to obtain the morphism $g \circ f : A \to C$. This composition operation is required to be associative and for every object $A$, there should exist a morphism $\id[A]$ which acts as identity on morphism composition. 

\begin{example}\label{ex:cats}
    \begin{enumerate}
        \item  Sets and functions between them form a category, $\cat{Set}$, in which the objects are sets and the morphisms are functions. Morphism composition is function composition and the identity morphism is the identity function, $f(x) = x$.
        \item    The category $\cat{FHilb}$, is the category in which the objects are finite dimensional Hilbert spaces and the morphisms are linear transformations between them.
        \item     The category $\cat{Met}$ of extended pseudometric spaces has extended pseudometric spaces as its objects: these are pairs $(X,d)$ where $X$ is a set and $d\colon X\times X\to [0,\infty]$ satisfies the axioms of a pseudometric\footnote{These are almost the axioms of a metric, except distinct points can have distance zero.}, except that we allow for points with infinite distance\footnote{This corresponds to the adjective ``extended'', and is mostly for mathematical convenience. This can be ignored as in the sequel as all metrics we use take finite values.}. The morphisms in $\cat{Met}$ are given by the \emph{short} (or distance non-increasing) maps, so that maps $(X,d)\to (Y,e)$ are given by functions $f\colon X\to Y$ satisfying $e(f(x),f(y))\leq d(x,y)$ for all $x,y\in X$. 
        \item  Recall that a monoid is basically a group without inverses, \ie a set $M$ equipped with a binary operation $\cdot\colon M\times M\to M$ that is associative and has a unit element. Any monoid $(M,\cdot)$ can be viewed as a category with one object $\bullet$, with the morphisms $\bullet\to\bullet$ given by elements of $M$ and composition given by $\cdot$.
        \item   Any partially ordered set $(P,\leq)$ induces a category whose objects are given by the elements of $P$, and there exists a
        a unique morphism $x\to y$ iff $x\leq y$.
    \end{enumerate}
\end{example}

    For any two objects in a category $\CC$, we denote the set of all morphisms $A\to B$ by $\CC (A,B)$.

In many categories of interest one can not only compose morphisms sequentially, but also in parallel. For instance in $\cat{Set}$, given two morphisms $f\colon A\to B$ and $g\colon C\to D$, we can form the morphism $f\times g\colon A\times C\to B\times D$. This parallel composition is almost associative, commutative and has a unit. For instance, there is an obvious bijection relating the sets $A\times (B\times C)$ and $(A\times B)\times C$ that merely re-brackets the data. This idea is made precise by the notion of a \emph{symmetric monoidal category} (SMC). We begin by introducing a stricter notion which is easier to define precisely although it fails to capture many examples of interest.

\begin{definition}[Symmetric strict monoidal category]

 A strict monoidal category $(\CC, \otimes, I)$ is a category $\CC$ equipped with an object $I$ called the monoidal unit and a monoidal product $\otimes$ sending a pair $(A,B)$ of objects to an object $A\otimes B$, and two morphisms $f\colon A\to B$ and $g\colon C\to D$ 
 to a morphism $f\otimes g\colon A\otimes C\to B\otimes D$. The operation $\otimes$ must respect identity morphisms in that $\id[A]\otimes \id[B]=\id[A\otimes B]$. Moreover, the operation $\otimes$ should satisfy the \emph{interchange law}, which states 
 that whenever $g\circ f$ and $i\circ h$ are defined, then 
\begin{equation}\label{eq:interchange}
(g \circ f) \otimes (i \circ h) = (g \otimes i) \circ (f \otimes h)\end{equation}
 Finally, the operation $\otimes$ should be associative and unital in that for all objects $A,B,C$ and morphisms $f,g,h,$ we have:
\begin{equation*}
\begin{aligned}
    &(A \otimes B) \otimes C = A \otimes (B \otimes C),\\
    &I \otimes A = A \otimes I = A ,\\
    &(f \otimes g) \otimes h = f \otimes (g \otimes h),\\ 
    &\id[I] \otimes f = f \otimes \id[I] = f.
\end{aligned}
\end{equation*}
A symmetric strict monoidal category is a strict monoidal category with chosen isomorphisms $\sigma_{A,B} : A \otimes B \to B \otimes A$ for all $A,B$ such that (i) $\sigma_{B,A} \circ \sigma_{A,B} = \id[A \otimes B]$ (ii) the isomorphisms $\{\sigma_{A,B}\}_{A,B}$ are natural in the sense that $(g\otimes f)\circ \sigma = \sigma \circ (f\otimes g)$. 
\end{definition}

In the general (not necessarily strict) case, a monoidal category has an operation $\otimes$ as above, but the associativity and unit equations of it are replaced by isomorphisms (satisfying some further conditions), see \cite{heunenVicary_ct} for more details.

\begin{example}\label{ex:SMCs}
\begin{enumerate}
    \item    The monoidal structure of $\cat{Set}$ is described as follows: the monoidal product is the cartesian product, $\times$, and the unit object is a chosen one element set: $\{\bullet\}$.
    \item     The monoidal structure of $\cat{FHilb}$ has the tensor product of Hilbert spaces as the monoidal product and the one-dimensional Hilbert space, $\Cc$, as the unit object.
    \item     We equip $\cat{Met}$ with a monoidal structure as follows: we define $(X,d)\otimes (Y,e)$ to be the set $X\times Y$ equipped with the $\ell^1$ distance, so that the distance between $(x,y)$ and $(x',y')$ in $X\otimes Y$ is given by the sum $d(x,x')+e(y,y')$.
\end{enumerate}
\end{example}

A nice feature of (symmetric) monoidal categories is that there is an intuitive yet precise graphical syntax for describing morphisms in them. We next introduce these \emph{string diagrams} and our conventions for them. We draw string diagrams from left to right, just like quantum circuits, although the reader should be warned that different papers might have their string diagrams drawn from top to bottom or from bottom to top instead.
The sequential composition looks like:
\[\begin{pic}
\node[box=0/1/0/1] (C) at (5.15,0) {g \circ f};
\draw (C.east) to node[above] {$C$} ++(1, 0);
\draw (C.west) to node[above] {$A$} ++(-1, 0);
\end{pic}
\quad := \quad 
\begin{pic}
\node[box=0/1/0/1] (A) at (0,0) {f};
\node[box=0/1/0/1] (B) at (2,0) {g};
\draw (A.east) to node[above] {$B$} (B.west);
\draw (A.west) to node[above] {$A$} ++(-1, 0);
\draw (B.east) to node[above] {$C$} ++(1, 0);
\end{pic}\]

While the tensor composition is simply drawing two morphisms in parallel.
\[\begin{pic}
\node[box=0/2/0/2] (A) at (0,-0.75) {f \otimes g};
\draw (A.east.1) to node[above] {$B$} ++(1, 0);
\draw (A.west.1) to node[above] {$A$} ++(-1, 0);
\draw (A.east.2) to node[below] {$D$} ++(1, 0);
\draw (A.west.2) to node[below] {$C$} ++(-1, 0);
\end{pic}
\quad := \quad 
\begin{pic}
\node[box=0/1/0/1] (C) at (4.5,-0.25) {f};
\node[box=0/1/0/1] (D) at (4.5,-1.25) {g};
\draw (C.east) to node[above] {$B$} ++(1, 0);
\draw (C.west) to node[above] {$A$} ++(-1, 0);
\draw (D.west) to node[above] {$C$} ++(-1, 0);
\draw (D.east) to node[above] {$D$} ++(1, 0);
\end{pic}\]
The symmetry isomorphisms are drawn as wire crossings :
\[\begin{pic}
\node[box=0/2/0/2, minimum height=10mm] (A) at (0,0) {\sigma_{A,B}};
\draw (A.west.1) to node[above] {$A$} ++(-1, 0);
\draw (A.west.2) to node[below] {$B$} ++(-1, 0);
\draw (A.east.1) to node[above] {$B$} ++(1, 0);
\draw (A.east.2) to node[below] {$A$} ++(1, 0);
\end{pic}
\quad := \quad 
\begin{pic}
\node (botl) at (0, -0.5) {};
\node (topl) at (0, 0.5) {};
\draw (topl) to node[above] {$A$} ++(.75,0) to[in=180,out=0] ++(1.5,-1) to node[below] {$A$} ++(.75,0);
\draw (botl) to node[below] {$B$} ++(.75,0) to[in=180,out=0]  ++(1.5,1) to node[above] {$B$} ++(.75,0);
\end{pic}
\]

The string diagrams make the axioms intuitive: for instance, the condition $\sigma_{B,A} \circ \sigma_{A,B} = \id[A \otimes B]$ becomes 
\[\begin{pic}\node (botl) at (0, -0.5) {};
\node (topl) at (0, 0.5) {};
\draw (topl) to node[above] {$A$} ++(.75,0) to[in=180,out=0] ++(1.5,-1) to node[below] {$A$} ++(.75,0) to[in=180,out=0]  ++(1.5,1) to node[above] {$A$} ++(.75,0);
\draw (botl) to node[below] {$B$} ++(.75,0) to[in=180,out=0]  ++(1.5,1) to node[above] {$B$} ++(.75,0)  to[in=180,out=0] ++(1.5,-1) to node[below] {$B$} ++(.75,0);
\end{pic}
\quad=\quad \begin{pic}\node (botl) at (0, -0.5) {};
\node (topl) at (0, 0.5) {};
\draw (topl) to node[above] {$A$} ++(.75,0) to ++(1,0);
\draw (botl) to node[below] {$B$} ++(.75,0) to ++(1,0);
\end{pic}
\]
so that two crossings undo each other, and the naturality condition for the symmetry can be pictured as us being allowed to slide the boxes corresponding to the morphisms $f$ and $g$ through the crossing:
\[\begin{pic}
\node[box=0/1/0/1] (A) at (0, 0.5) {g};
\node[box=0/1/0/1] (B) at (0, -0.5) {f};
\draw (A.east.1) to node[above] {$D$} ++(1,0);
\draw (B.east.1) to node[above] {$B$} ++(1,0);
\draw (A.west.1) to[in=0,out=180] ++(-1.5,-1) to node[above] {$C$} ++(-.75,0);
\draw (B.west.1) to[in=0,out=180]  ++(-1.5,1) to node[above] {$A$} ++(-.75,0);
\end{pic}
\quad
=\quad
\begin{pic}
\node[box=0/1/0/1] (A) at (0, 0.5) {f};
\node[box=0/1/0/1] (B) at (0, -0.5) {g};
\draw (A.west.1) to node[above] {$A$} ++(-1,0);
\draw (B.west.1) to node[above] {$C$} ++(-1,0);
\draw (A.east.1) to[in=180,out=0] ++(1.5,-1) to node[above] {$B$} ++(.75,0);
\draw (B.east.1) to[in=180,out=0]  ++(1.5,1) to node[above] {$D$} ++(.75,0);
\end{pic}\]

Other axioms are used in the pictures implicitly: for instance, when drawing three parallel lines we don't add in any brackets, and the interchange law \eqref{eq:interchange} guarantees that the following picture is unambiguous:
\[\begin{pic}
\node[box=0/1/0/1] (A) at (0,1) {f};
\node[box=0/1/0/1] (B) at (0,0) {h};
\node[box=0/1/0/1] (C) at (1.5,1) {g};
\node[box=0/1/0/1] (D) at (1.5,0) {i};
\draw (C.east) to node[above] {} ++(1, 0);
\draw (A.west) to node[above] {} ++(-1, 0);
\draw (A.east) to (C.west);
\draw (B.west) to node[above] {} ++(-1, 0);
\draw (B.east) to (D.west);
\draw (D.east) to node[above] {} ++(1, 0);
\end{pic}\]

We will also need the definition of a bimonoidal (rig) category, which we state formally below. This is a ``categorified'' version of a rig/semiring ($\approx$ ring without negatives, like the natural numbers with addition and multiplication), just like a monoidal category is a categorified version of a monoid.

\begin{definition}
    A bimonoidal category is a category equipped with two monoidal products - a symmetric monoidal structure $(\CC, \oplus, 0)$ and a monoidal structure $(\CC, \otimes, I)$- such that there exist distributivity  isomorphisms 
    \begin{equation*}
    \begin{aligned}
    &d_l : A \otimes (B \oplus C) \to (A \otimes B) \oplus (A \otimes C),\\
    &d_r : (A \oplus B) \otimes C \to (A \otimes C) \oplus (B \otimes C),  
    \end{aligned}
    \end{equation*}
    and absorption isomorphisms
    \begin{equation*}
    \begin{aligned}
        &a_l : A \otimes 0 \to 0,\\
        &a_r : 0 \otimes A \to 0
    \end{aligned}
    \end{equation*}
    that satisfy some coherence laws \cite{laplaza_1972}. 
\end{definition}

\subsubsection{Base Categorical modeling}

We will model our quantum systems of interest as finite-dimensional $C^*$-algebras and our quantum processes as quantum channels \ie CPTP maps between them. We will only sketch these informally and refer the reader to~\cite{Keyl2002} for full details. A paradigmatic example of a $C^*$-algebra is given by the space of bounded operators on a Hilbert space. First of all this is a vector space over the complex numbers and composition of operators makes it into an algebra. Moreover, the operation of taking the adjoint equips this algebra with an involution, and the general notion of a $C^*$-algebra abstracts away from this by axiomatizing important interactions between these structures and the operator norm. It is standard that any $C^*$-algebra embeds into one of this form, much like any group embeds into a permutation group. 

In the finite-dimensional case, a paradigmatic example of a $C^*$-algebra is given by $M_n(\Cc)$, the $n\times n$ complex matrices. Any finite-dimensional $C^*$-algebra is isomorphic to a finite direct sum of such matrix algebras (see e.g.~\cite[Theorem III.I.1]{davidson1996c}), and hence can be captured by a list $[n_1,\dots n_k]$ of non-negative natural numbers specifying the dimension of each matrix algebra.

The main reason we work with general (but finite-dimensional) $C^*$ algebras is that they allow us to treat quantum and classical systems. For example, the state-space of a qubit is modeled by the $C^*$-algebra $M_2(\Cc)$, whereas the state-space of a classical bit is modeled by $M_1(\Cc)\oplus M_1(\Cc)\cong \Cc\oplus \Cc$. The act of destructively measuring a qubit in the standard basis is then represented by the CPTP-map $M_2(\Cc)\to \Cc\oplus \Cc$ acting by 
$\begin{pmatrix}
a & b \\
c & d
\end{pmatrix}\mapsto (a\ d)$, 
and a non-destructive measurement of a qubit could be modeled as a map $M_2(\Cc)\to M_2(\Cc)\otimes (\Cc\oplus \Cc)$.

\begin{definition}[\CPTP]\label{def:cat_cptp}
The category $\CPTP$ of quantum channels is defined as follows: the objects are finite-dimensional $C^*$-algebras and the maps are completely positive trace preserving maps.
\end{definition}

The category of $C^*$-algebras admits two monoidal structures, $\oplus$ and $\otimes$ given by the natural direct sum and direct product of the underlying vector space. Thus, if $A \cong M_{n_1}(\Cc) \oplus \ldots \oplus M_{n_k}(\Cc)$ and $B \cong M_{m_1}(\Cc) \oplus \ldots \oplus M_{m_p}(\Cc)$, then 
\begin{align*}
	A \oplus B &= M_{n_1}(\Cc) \oplus \ldots \oplus M_{n_k}(\Cc) \oplus M_{m_1}(\Cc) \oplus \ldots \oplus M_{m_p}(\Cc) \\
	A \otimes B &= M_{n_1 m_1}(\Cc) \oplus M_{n_1 m_2}(\Cc) \oplus \ldots M_{n_1 m_p}(\Cc) \oplus M_{n_2 m_1}(\Cc) \oplus \ldots M_{n_k m_p}(\Cc)
\end{align*}
In shorthand, this can be represented as 
\begin{align*}
	[n_1,\dots, n_k] \oplus [m_1, \ldots, m_p] &= [n_1,\dots n_k, m_1, \ldots, m_p] \\
	[n_1,\dots, n_k] \otimes [m_1, \ldots, m_p] &= [n_1 m_1, n_1 m_2, \ldots, n_k m_p]
\end{align*} 

With respect to the $\oplus$ product, the 0-dimensional $C^*$-algebra is the unit object, while with respect to the  $\otimes$ product, $\Cc \cong M_1 (\Cc)$ is the unit object.

\begin{lemma}[\CPTP~is bimonoidal]
	The category $\CPTP$ is bimonoidal with product operations $\oplus$ and $\otimes$.
\end{lemma}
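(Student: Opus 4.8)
The plan is to verify each ingredient of the definition of a bimonoidal category in turn, deferring the only genuine work -- the coherence laws -- to a single faithfulness argument at the end. We already know on objects that $\oplus$ concatenates the defining lists $[n_1,\dots,n_k]$ and $\otimes$ multiplies them entrywise, so the two structures are well-defined on objects with units $0$ (the empty list) for $\oplus$ and $\Cc\cong M_1(\Cc)$ for $\otimes$. First I would check functoriality on morphisms: given CPTP maps $\Phi\colon A\to A'$ and $\Psi\colon B\to B'$, the block-diagonal map $\Phi\oplus\Psi$ and the tensor map $\Phi\otimes\Psi$ are again completely positive (direct sums and tensors of CP maps are CP) and trace preserving (trace of a direct sum is the sum of the traces, and trace is multiplicative under $\otimes$). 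The structural isomorphisms in each structure -- associators, unitors, and the symmetry $\sigma$ -- are the canonical $*$-isomorphisms of $C^*$-algebras that merely rebracket or permute the summands or factors; each is a $*$-isomorphism and hence, together with its inverse, completely positive and trace preserving, so it is an isomorphism in $\CPTP$. This establishes $(\CPTP,\oplus,0)$ as a symmetric monoidal category and $(\CPTP,\otimes,\Cc)$ as a (in fact symmetric) monoidal category.

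Next I would construct the distributivity and absorption isomorphisms directly from the list description. With $A=[a_i]$, $B=[b_j]$, $C=[c_k]$ one has $A\otimes(B\oplus C)=[a_ib_j,\,a_ic_k]$ and $(A\otimes B)\oplus(A\otimes C)=[a_ib_j,\,a_ic_k]$, so $d_l$ is the evident block-permuting $*$-isomorphism, and $d_r$ is obtained symmetrically; being $*$-isomorphisms, both are isomorphisms in $\CPTP$. For absorption, $A\otimes 0$ and $0\otimes A$ are the empty list, i.e. literally the $0$-object, so $a_l$ and $a_r$ are identities (equivalently, the unique morphisms into $0$). None of this is more than bookkeeping.

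The substantive content, and the step I expect to be the main obstacle, is the bundle of coherence laws of \cite{laplaza_1972}: checking all of Laplaza's diagrams by hand would be long and error-prone. I would avoid this entirely by exhibiting the forgetful functor $U\colon\CPTP\to\cat{FdVect}$ that sends a finite-dimensional $C^*$-algebra to its underlying complex vector space and a CPTP map to its underlying linear map. This $U$ is faithful, and since $U(A\oplus B)=U(A)\oplus U(B)$ and $U(A\otimes B)=U(A)\otimes U(B)$ as vector spaces, it is strong monoidal for both products; moreover, because every structural isomorphism above is by construction the canonical underlying linear map, $U$ carries each such isomorphism of $\CPTP$ to its counterpart in $\cat{FdVect}$. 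As $\cat{FdVect}$ with $\oplus$ and $\otimes$ is the prototypical rig category and satisfies every Laplaza coherence law, each coherence diagram in $\CPTP$ becomes, after applying $U$, a commuting diagram in $\cat{FdVect}$; faithfulness of $U$ then forces the two composites it equates to have been equal already in $\CPTP$. This reduces the full coherence check to the classical case of vector spaces and completes the proof.
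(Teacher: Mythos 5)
Your proof is correct, but it takes a genuinely different route from the paper: the paper gives no argument of its own and simply cites \cite[Definition 2.10]{huotstaton:channelscategorically}, where the bimonoidal structure of $\CPTP$ is established as part of a broader categorical treatment of quantum channels. Your argument is a self-contained alternative. The routine layers (closure of CPTP maps under $\oplus$ and $\otimes$, the structural maps being canonical $*$-isomorphisms, the distributors being block permutations of the lists $[n_i m_j]$, and the absorbers being identities) are all fine, and your device for discharging the Laplaza conditions---transporting each coherence diagram through the faithful forgetful functor $U\colon\CPTP\to\cat{FdVect}$ and invoking coherence in the prototypical rig category of vector spaces---is a valid and standard reflection argument: it works precisely because you \emph{defined} every associator, unitor, symmetry, distributor and absorber of $\CPTP$ to be the map whose underlying linear map is the corresponding structural map of $\cat{FdVect}$, so $U$ sends each Laplaza diagram to a commuting diagram and faithfulness pulls commutativity back. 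What your route buys is transparency: the paper later uses the distributors concretely (in the Branch-up/Branch-down and $\mathsf{ife}$/$\mathsf{elif}$ constructions), and your proof exhibits them explicitly; what the citation buys is brevity, plus the stronger result of Huot--Staton, who obtain the structure from a universal characterization rather than by hand. Two points you gloss over, neither fatal: first, ``$*$-isomorphism hence trace preserving'' is not automatic (unital $*$-homomorphisms can scale the trace, e.g.\ the inclusion $\Cc\to M_2(\Cc)$, $\lambda\mapsto\lambda\mathbbm{1}$); it holds here because $*$-isomorphisms carry minimal projections to minimal projections, or simply because your rebracketing/permutation maps visibly preserve the canonical trace. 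Second, if $\otimes$ is defined by the paper's list formula rather than as the algebra tensor product, then $U$ is strong rather than strict monoidal for $\otimes$, so the transport argument must be phrased up to the coherence isomorphisms of $U$; this is bookkeeping and changes nothing of substance.
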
 
\begin{proof}
	See \cite[Definition 2.10]{huotstaton:channelscategorically}.
\end{proof}


\subsubsection{Pseudo-Code}

To express algorithms in this category in a simple fashion, we introduce translations from pseudo-code to morphisms in \CPTP. This translation relies on the bimonoidal structure of \CPTP. We start with Branch-up and Branch-down -- isomorphisms that essentially state that there are two ways of expressing classical distributions of quantum states.  
\begin{definition}[Branch-up/ Branch-down]
Let $A$ be an object in $\CPTP$. Then, we define the isomorphism $B^{up}_{n,A}$ using the unitors and distributors as follows:
\begin{equation*}
B^{up}_{n,A}: I^{\oplus n} \otimes A \xrightarrow[]{\simeq} (I^{\oplus n-1} \otimes A) \oplus (I \otimes A) \xrightarrow{\simeq} (I^{\oplus n-1} \otimes A) \oplus A \xrightarrow{\simeq} \ldots \xrightarrow{\simeq} A ^{\oplus n}.
\end{equation*}

So, $B^{up}_{n,A}: I^{\oplus n} \otimes A \xrightarrow[]{\simeq} A^{\oplus n}$. 

Symmetrically, we define $B^{down}_{n,A}: A^{\oplus n} \xrightarrow{\simeq} I^{\oplus n} \otimes A$.
    
\end{definition}

In particular, for algorithms in the context of verification, explicit branching is essential. We allow for branching using the following definition.    
\begin{definition}[If-Else]\label{def:elif}
    Let $A, B, C$ be objects in $\CPTP$ and let $f \colon A\to B$, $g\colon A\to C$ be morphisms in $\CPTP$. Then, $\mathsf{ife}(f,g)\colon  (I\oplus I) \otimes A \to B \oplus C$ is a morphism (channel) defined as:
    \begin{equation*}
    \mathsf{ife}(f,g) = (f \oplus g)\circ B^{up}_{2,A}
    \end{equation*}

    More generally, an if-else channel applied to $n$ arguments $f_1, \ldots, f_n$ (and corresponding to an if-then-else structure with $(n-2)$ else-if structures) is a morphism $\mathsf{elif} ((f_i)_{i=1}^n) \colon (I^{\oplus n}\otimes A\to \bigoplus_{i=1}^n B_i)$ defined as
    \begin{equation*}
            \mathsf{elif} ((f_i)_{i=1}^n) = \left(\bigoplus_{i=1}^n f_i\right)\circ B^{up}_{n,A},
    \end{equation*} where $f_i : A \to B_i$. 
\end{definition}

Using the symmetry of the category allows for swapping registers. To denote this formally in an algorithm, we introduce the corresponding pseudo-code.
\begin{definition}[Swap]
Given $A_1, \ldots, A_n$, objects in $\CPTP$, to define a map $\mathsf{swap_{k, l}} : A_1 \otimes \ldots \otimes A_k \otimes \ldots \otimes A_l \otimes \ldots \otimes A_n \to A_1 \otimes \ldots \otimes A_l \otimes \ldots \otimes A_k \otimes \ldots \otimes A_n$ we first let $\Sigma_{k, k+1} = \bigotimes_{i=1}^{k-1} I \otimes \sigma_{k, k+1} \otimes \bigotimes_{i = k+2}^{n} I$.
Then, 
\begin{equation*}
\mathsf{swap_{k, l}} = \Sigma_{k, l-1} \circ \ldots \circ \Sigma_{k, k+1} \circ \Sigma_{k, l} \circ \ldots \circ \Sigma_{l-2, l} \circ \Sigma_{l-1, l}. 
\end{equation*}
Graphically, for the case $n=3$, if we want to swap $A_1$ and $A_3$ the equation looks like (note that we do not draw the tensor units):
\begin{equation*}\begin{pic}
\node[box=0/2/0/2,minimum height=10mm,opacity=0] (A) at (0,0) {\sigma_{2, 3}};
\node[box=0/2/0/2,minimum height=10mm,opacity=0] (B) at (2,0.5) {\sigma_{1, 3}};
\node[box=0/2/0/2,minimum height=10mm,opacity=0] (C) at (4,0) {\sigma_{1, 2}};
\draw (A.east.2) to[in=0,out=180] (A.west.1) to node[above] {$A_2$} ++(-1, 0);
\draw (A.east.1) to[in=0,out=180] (A.west.2) to node[below] {$A_3$} ++(-1, 0);
\draw (B.east.2) to[in=0,out=180] (B.west.1) to node[above] {$A_1$} ++(-3, 0);
\draw (A.west.1) to node[above] {$A_2$} ++(-1, 0);
\draw (A.east.1) to node[above] {$A_3$} (B.west.2) to[in=180,out=0] (B.east.1);
\draw (A.east.2) to node[below] {$A_2$} (C.west.2) to[in=180,out=0] (C.east.1);;
\draw (B.east.1) to node[above] {$A_3$} ++(3, 0);
\draw (B.east.2) to node[above] {$A_1$} (C.west.1) to[in=180,out=0] (C.east.2);;
\draw (C.east.1) to node[above] {$A_2$} ++(1, 0);
\draw (C.east.2) to node[below] {$A_1$} ++(1, 0);
\end{pic}\end{equation*}
\end{definition}

\begin{definition}[Move-back]\label{def:move-back}
    Given $A_1, \ldots, A_n$ objects in $\CPTP$, the map $\mathsf{move-back}_{k, n} : A_1 \otimes \ldots \otimes A_k \otimes \ldots \otimes A_n \to A_1 \otimes \ldots \otimes A_{k-1} \otimes A_{k+1} \otimes \ldots \otimes A_n \otimes A_k$ is defined as:
    \begin{equation*}
    \mathsf{move-back}_{k, n} = \circ_{i = k}^{n} \mathsf{swap_{i, i + 1}}. 
    \end{equation*}
\end{definition}

At last, we need to implement a different kind of branching. So far, we are only representing explicit branching. Hence, one can learn from the outside which branch the program chose. However, if the meta-data of the state, especially the dimensionality, does not give away which branch was chosen, one can choose not to output this information. In this case, we shall delete this information after branching with $\mathsf{elif}$.

\begin{definition}[Forget-branch]\label{def:forget-branch} Given an object $A$ in  $\CPTP$, the morphism\\ $\mathsf{forget-branch}_{n, A} : A^{\oplus n} \to A$ is defined as follows:
\begin{equation*}
\mathsf{forget-branch}_{n, A} = (\text{Tr}_{I^{\oplus n}}\otimes \text{id}_{A})\circ B^{down}_{n, A} ,
\end{equation*} where $\text{Tr}_{(-)}$ is the map to the monoidal unit.
\end{definition}

\subsubsection{n-combs}

In the following, we wish to formalize settings where we can model useful cryptographic resources based on quantum channels shared between $n$ parties. In cryptographic protocols, each party acts locally on their system, and the parties interact with one another over multiple rounds. Our category therefore needs to model both local actions on systems as well as multiple rounds of interactions between parties. To this end, we present a slightly modified version of the $\ncomb$ category defined in~\cite[Definition 3.2]{broadbentkarvonen:categoricalcryptoextended}. In $\ncomb$ morphisms represent a given agent's part of a protocol. 
In contrast to the one in~\cite{broadbentkarvonen:categoricalcryptoextended}, our definition allows for settings, where the protocols don't necessarily use all shared resources.

 \begin{definition}\label{def:ncomb} Given an SMC $\CC$, the category $\ncomb(\CC)$ is defined as follows: objects of $\ncomb(\CC)$ are finite lists $(A_i,B_i)_{i=1}^m$ of pairs of objects of $\CC$. Morphisms are defined in two stages: For $p \leq m$, a morphism $(A_i,B_i)_{i=1}^m\to (C,D)$ is given by an injection $\imath\colon \{1,\dots , p\}\to\{1,\dots , m\}$ and a $p$-comb


\begin{equation*}\begin{pic}
\node[box=0/2/0/1,minimum height=16mm] (A) at (0,0) {g_0};
\node[box=0/1/0/1,dashed] (B) at (1.5,.4) {};
\node[box=0/2/0/2,minimum height=16mm] (C) at (3,0) {g_1};
\node (b) at (4.5,.4) {$\dots$};
\node (c) at (4.5,-.4) {$\dots$};
\node[box=0/2/0/2,minimum height=16mm] (D) at (6,0) {g_{p-1}};
\node[box=0/1/0/1,dashed] (E) at (7.5,.4) {};
\node[box=0/1/0/2,minimum height=16mm] (F) at (9,0) {g_p};
\draw (A.west) to node[above] {$C$} ++(-1,0);
\draw (A.east.1) to node[above] {$A_{\imath(1)}$} (B.west);
\draw (A.east.2) to node[above]{$Y_1$} (C.west.2);
\draw (B.east) to node[above]  {$B_{\imath(1)}$} (C.west.1);
\draw (C.east.1) to node[above] {$A_{\imath(2)}$} (b.west);
\draw (C.east.2) to node[above] {$Y_2$} (c.west);
\draw (D.west.2) to node[above] {$Y_{p - 1}$} (c.east);
\draw (D.east.1) to node[above] {$A_{\imath(p)}$} (E.west);
\draw (D.east.2) to node[above] {$Y_p$} (F.west.2);
\draw (E.east) to node[above] {$B_{\imath(p)}$} (F.west.1);
\draw (F.east) to node[above] {$D$} ++(1,0);
\draw (b.east) to node[above] {$B_{\imath(p-1)}$} (D.west.1);
\end{pic}\end{equation*}
 
 in $\CC$. Formally, a $p$-comb is an equivalence class of tuples $(g_0,\dots, g_p)$ of maps in $\CC$,  where $g_0\colon C\to A_{\imath(1)}\otimes Y_1$, $g_i\colon B_{\imath(i)}\otimes Y_i\to A_{\imath(i+1)}\otimes Y_{i+1}$ for $i=1\dots p-1$ and $g_p\colon B_{\imath(p)}\otimes Y_p\to D$ for some objects $Y_i$. Two such tuples are identified if, whenever one ``plugs the holes'' with maps of the form $Z_i\otimes A_{\imath(i)}\to Z_i\otimes B_{\imath(i)}$, the resulting maps in $\CC$ are equal.

 A morphism $(A_i,B_i)_{i=1}^m\to (C_j,D_j)_{j=1}^k$ is given by a function $f\colon \{1,\dots , m\}\to\{1,\dots , k\}$ and a morphism $(A_i,B_i)_{i\in f^{-1}(j)}\to (C_j,D_j)$ for each $j$. Composition is defined by nesting circuits into circuits, and the monoidal product is given by concatenation of lists.
 \end{definition}
 
 Note that the monoidal product in the underlying category $\CC$ is different from that of in $\ncomb(\CC)$.

\section{Categorical Composable Cryptography}

One of the main contributions of~\cite{broadbentkarvonen:categoricalcrypto,broadbentkarvonen:categoricalcryptoextended} are highly general composition theorems. These can be viewed as giving a blueprint for numerous models of composable cryptography: one gets a specific model by fixing each degree of freedom in the formalism. To fix these, one first needs to choose two SMCs $\DD$ and $\CC$, where $\DD$ models the protocols, and $\CC$  models the relevant kind of (computational) processes, which may or may not be more general than those given in $\DD$. One also needs to fix a map $\DD\to \CC$ of SMCs which interprets protocols into processes. One also needs to give a map out of $\CC$, which gives for each object (thought of as a system type) the resources of that type, and specifies how processes in $\CC$ act on these resources. If one requires perfect security, this operation $R$ can be modeled as a suitable kind of map of SMCs $\CC\to\cat{Set}$, so that in particular for each object $A$ we have a set $R(A)$ of resources of that type. If we want to model security up to (computational) indistinguishability, $R(A)$ should be equipped
with an equivalence relation, and if we want to do security up to some notion of distance, then $R(A)$ should be a (pseudo)metric space. The chain of maps $\DD\to\CC\to\cat{Set}$ (or $\DD\to\CC\to\cat{Met}$) then induces a \emph{resource theory} of correct conversions between resources. To add in a notion of security, one needs a further structure called an \emph{attack model} on $\CC$, which in a nutshell specifies the way that adversaries can force a protocol to deviate from its intended behavior. One can then form the SMC of (suitably correct) resource conversions that are secure against this attack model, and the fact that this results in an SMC is the heart of the composition theorem---secure conversions are closed under sequential and parallel composition.  

For a detailed exposition on how these resource theories arise we refer the reader to~\cite{broadbentkarvonen:categoricalcrypto,broadbentkarvonen:categoricalcryptoextended}. For a more general study of resource theories one can consult~\cite{coecke_mathematical_2016}.

In this section we adapt this framework for our analysis of quantum state verification protocols. We present the resource theories we work in for single- and multi-client verification protocols. Based on this we can present our security definition and finally we give a formal definition of our ideal resources.
\subsection{The relevant resource theories}

We wish to consider ``security up to $\varepsilon$'', so our mapping $R$ specifying the resources of a given type should be a (pseudo)metric space. In other words, we wish for $R$ to land in the SMC $\cat{Met}$ of extended (pseudo)metric spaces and short maps from examples~\ref{ex:cats} and \ref{ex:SMCs}.

In fact, we will define two different, albeit similar, resource theories for multi- and single-party quantum state verification. In quantum state verification, we consider a source that can perform any arbitrary quantum operation. This corresponds to the category $\CPTP$, which we denote from now on with $\CC$ to simplify notation.
In the single-party case we consider the receiving party, that wishes to verify the quantum state, to be able to measure the state, but not to create the state. We define $\DD$ to be the sub-SMCs of $\CC$ generated by morphisms that are destructive quantum measurements and by arbitrary maps between classical systems (which correspond to stochastic maps).

For multi-party verification, we again consider a source corresponding to $\CC$ along with $k$ clients each also acting in $\CC$. The clients can only act locally and cannot create entanglement with one another. This restriction is represented by the Cartesian product $\CC^k$. The resource theories of single- and multi-party quantum state verification respectively are induced by the maps
\begin{align} \ncomb(\DD\times\CC)&\rightarrow\ncomb(\CC)\rightarrow\Met\label{restheory_single}\\
\ncomb(\CC^k\times\CC)&\rightarrow\ncomb(\CC)\rightarrow\Met\label{restheory_multi}
\end{align}
The morphisms on the left are the monoidal functors induced by the ($k$-)fold tensor product $\DD\times\CC\hookrightarrow\CC$ and $\CC^k\times\CC\hookrightarrow\CC$. The second map is given by $\ncomb(\CC)(I,-)$, where $I$ is the tensor unit in $\ncomb(\CC)$. 

Let us now explain what these abstract definitions amount to concretely, starting from the simpler case of~\eqref{restheory_single}. We first unwind the definitions. An object of $\ncomb(\DD\times \CC)$ is given by a finite list $(A_i,B_i)_{i=1}^n$ of objects of $\DD\times \CC$, but we'll first focus on lists $(A,B)$ of length one. In turn, an object of $\DD\times \CC$ is a pair of objects: one of $\DD$ and one of $\CC$. Thus each $(A,B)$ is of the form $((A_1,A_2),(B_1,B_2))$, and one can then show that the map~\eqref{restheory_single} sends $((A_1,A_2),(B_1,B_2))$ to $\CPTP(A_1\otimes A_2,B_1\otimes B_2)$. It follows that a resource of type $((A_1,A_2),(B_1,B_2))$  is given by a bipartite quantum channel $A_1\otimes A_2\to B_1\otimes B_2$, where we think of the first input and output belonging to the first party (the verifier) and the second input and output belong to the second party (the source). More generally, a resource of type $(A_i,B_i)_{i=1}^n$ is a list of $n$ such bipartite channels. 

Given a starting resource $f\colon A_1\otimes A_2\to B_1\otimes B_2$ (of type $((A_1,A_2),(B_1,B_2))$ and a target resource $g\colon C_1\otimes C_2\to D_1\otimes D_2$ (of type $((C_1,C_2),(D_1,D_2))$, a resource conversion $f\to g$ can be depicted by two $1$-combs, one for each party, as in
\begin{equation}\label{pic:pairofcombs}\begin{pic}
\node[box=0/2/0/1,minimum height=16mm] (A) at (0,0) {g_0};
\node[box=0/1/0/1,dashed] (B) at (1.5,.4) {};
\node[box=0/2/0/2,minimum height=16mm] (C) at (3,0) {g_1};
\draw (A.west) to node[above] {$C_1$} ++(-1,0);
\draw (A.east.1) to node[above] {$A_1$} (B.west);
\draw (A.east.2) to node[above]{$Y_1$} (C.west.2);
\draw (B.east) to node[above]  {$B_1$} (C.west.1);
\draw (C.east) to node[above] {$D_1$} ++(1,0);
\end{pic} \text{ , }
\begin{pic}
\node[box=0/2/0/1,minimum height=16mm] (A) at (0,0) {h_0};
\node[box=0/1/0/1,dashed] (B) at (1.5,.4) {};
\node[box=0/2/0/2,minimum height=16mm] (C) at (3,0) {h_1};
\draw (A.west) to node[above] {$C_2$} ++(-1,0);
\draw (A.east.1) to node[above] {$A_2$} (B.west);
\draw (A.east.2) to node[above]{$Y_2$} (C.west.2);
\draw (B.east) to node[above]  {$B_2$} (C.west.1);
\draw (C.east) to node[above] {$D_2$} ++(1,0);
\end{pic}
\end{equation}
where we require that the first part, belonging to the verifier, lives in the category $\DD$ (i.e., that $g_0$ and $g_1$ are morphisms in $\DD$). This resource conversion is correct, exactly if, when applied to $f$, it produces $g$, \ie filling the hole in 
\begin{equation}\label{pic:bipartitecomb}\begin{pic}
\node[box=0/2/0/1,minimum height=16mm] (A) at (0,0) {g_0};
\node[box=0/2/0/2,dashed] (B) at (1.5, -1) {};
\node[box=0/2/0/2,minimum height=16mm] (C) at (3,0) {g_1};
\node[box=0/2/0/1,minimum height=16mm] (D) at (0,-2) {h_0};
\node[box=0/2/0/2,minimum height=16mm] (E) at (3,-2) {h_1};
\draw (A.west) to node[above] {$C_1$} ++(-1,0);
\draw (A.east.2) to node[above] {$A_1$} (B.west.1);
\draw (A.east.1) to node[above]{$Y_1$} (C.west.1);
\draw (B.east.1) to node[above]  {$B_1$} (C.west.2);
\draw (C.east) to node[above] {$D_1$} ++(1,0);
\draw (D.west) to node[above] {$C_2$} ++(-1,0);
\draw (D.east.1) to node[above] {$A_2$} (B.west.2);
\draw (D.east.2) to node[above]{$Y_2$} (E.west.2);
\draw (B.east.2) to node[above]  {$B_2$} (E.west.1);
\draw (E.east) to node[above] {$D_2$} ++(1,0);
\end{pic}
\end{equation}
with $f$ results in  $g$. Typically, but not necessarily, the resources used enable communication between the parties, so that one could then think of these pictures as depicting a 2-party 1-round protocol.
A more general resource conversion  $(f_1,\dots f_n)\to g$ is similar, except that (i) there's more holes in the pictures (corresponding to more rounds in the protocol) and (ii) the parties have to agree on the order they call the shared resources $f_i$ (which, in the case of communication, amounts to agreeing what kind of information is sent at each round). The parties can also agree to not use some of the shared resources. We note that since n-combs are finite and holes represent rounds, we are de facto setting an upper limit on the number of rounds. However, this does not pose a problem since the size of the n-comb can be arbitrarily chosen. In fact, any sensible model would not allow for an infinite number of rounds and would abort after a preset number of rounds.

We consider security of such protocols in the next subsection, and conclude this subsection by verifying carefully that  $\ncomb(\CC)(I,-)$ is indeed a map to $\Met$.

Objects in $\ncomb(\CC)$ are finite lists, and the tensor unit $I$ of $\ncomb(\CC)$ is the empty list. For $C,D\in \CC$ we have $(C,D)\in\ncomb(C)$ and
\begin{equation*}
    \ncomb(\CC)(I,(C,D))=\CC(C,D).
\end{equation*}
By endowing $\CC(C,D)$ with the diamond distance (\ref{def:diamond_distance}), we obtain a metric space $(\CC(C,D),\dia)$.
For an object $(A_i,B_i)_{i=1}^n$, one can show that $\ncomb(\CC)(I,-)$ maps it to a product space
\begin{align}\label{eq:product_hom_spaces}
    \ncomb(\CC)(I, (A_i,B_i)_{i=1}^n)=\CC(A_1,B_1)\times\dots\times\CC(A_n,B_n).
\end{align}
For this product space we use the monoidal structure of $\cat{Met}$, so it is equipped with the $\ell^1$-distance given by the sum of diamond distance on each of the hom spaces. Let $d:=\sum_{i=1}^n\dia$ denote said distance. The product space (\ref{eq:product_hom_spaces}) is then also a metric space $\left(\bigtimes_{i=1}^n \CC(A_i,B_i),d\right)$.
 
 Next, we show that n-combs induce morphisms in $\Met$, which are short maps.  A n-comb maps a list $(A_i,B_i)_{i=1}^l$ with $l\geq n$ to $(C,D)$ This induces a morphism
 \[
     \gamma:\ncomb(\CC)(I, (A_i,B_i)_{i=1}^l)\rightarrow \ncomb(\CC)(I, (C,D)).
 \]
that is, a morphism 
\[\CC(A_1,B_1)\times\dots\times\CC(A_n,B_n)\to \CC(C,D)\] 
that acts on $(a_1,\dots a_n)$ by filling the holes in the comb with $a_i$.
We then need to check that $\gamma$ itself induces a short map, \ie a morphism in $\Met$. Let $x=(\imath,(h_i)_{i=0}^n)$ with $h_0:C\to A_{\imath(1)}\otimes Y_1,\ h_1:B_{\imath(i)}\otimes Y_i\to A_{\imath(i+1)}\otimes Y_{i+1}$ for $i=1,\dots,n-1$ and $h_n:B_{\imath(n)}\otimes Y_n\to D$ specify an n-comb, and let $x(\bar a)$ be the n-comb filled with the tuple of CPTP maps $\bar a =(a_1,\dots,a_l)$, where $a_i:A_i\rightarrow B_i$ and $l\geq n$. And $Y_k$ denotes an auxiliary register. To show that $\gamma$ induces a short map we need to show that
$d(\bar a, \bar b)\geq\dia(x(\bar a),x(\bar b))$. Indeed using the notation as in Definition \ref{def:ncomb}, with $\imath$ being an injection from $\{1,...,n\}$ to $\{1,...,l\}$, we can write $x(\bar a)$ as
\begin{align*}
    x(\bar a)=h_n\circ_{i=1}^n[(a_{\imath(i)}\otimes\id[Y_i])\circ h_{i-1}].
\end{align*}
By the properties of the diamond distance (\ref{eq:diamond_distance_composition}),(\ref{eq:diamond_dist_with_id}) and the fact that $\dia(f,f)=0$ by virtue of it being a metric, it then follows
\begin{align*}
    &\dia(x(\bar a),x(\bar b))\\
    =&\dia (h_n\circ_{i=1}^n[(a_{\imath(i)}\otimes\id[Y_i])\circ h_{i-1}], h_n\circ_{i=1}^n[(b_{\imath(i)}\otimes\id[Y_i])\circ h_{i-1}])\\
    \leq&\dia(h_n,h_n)+\dia (\circ_{i=1}^n[(a_{\imath(i)}\otimes\id[Y_i])\circ h_{i-1}],\circ_{i=1}^n[(b_{\imath(i)}\otimes\id[Y_i])\circ h_{i-1}])\\
    \leq&\sum_{i=1}^n\dia ((a_{\imath(i)}\otimes\id[Y_i])\circ h_{i-1},(b_{\imath(i)}\otimes\id[Y_i])\circ h_{i-1})\\
    \leq&\sum_{i=1}^n[\dia((a_{\imath(i)}\otimes\id[Y_i]),(b_{\imath(i)}\otimes\id[Y_i]))+\dia(h_{i-1},h_{i-1})]\\
     \leq&\sum_{i=1}^n \dia(a_{\imath(i)},b_{\imath(i)})
      \leq \sum_{i=1}^l \dia(a_i,b_i)=d(\bar a,\bar b).
\end{align*}
\subsection{Security Definition} 
Using the resource theories (\ref{restheory_single}) and (\ref{restheory_multi}), we can present our security definition based on \cite{broadbentkarvonen:categoricalcrypto,broadbentkarvonen:categoricalcryptoextended}. The security definition relies on an attack model $\A$. An attack model on a category $\CC$ gives for every morphism $f$ in $\CC$ a class of morphisms $\A(f)$ that fulfills certain properties which are stated in the original work.\\ 
To capture both of the situations above in a single situation, we model the situation with $K+1$ parties, where the last party acts maliciously in $\CC$, and let $\cat{E}$ be a sub-SMC of $\CC$ (where in the above, we either have $\cat{E}=\CC$ or $\cat{E}=\DD$).
We define an attack model on $\ncomb(\cat{E}^K\times \CC)$  derived from a general attack model in \cite{broadbentkarvonen:categoricalcryptoextended}.
The attack model $\A$ consists of allowing the last party to change their part of any $m$-comb arbitrarily while leaving everything else in the morphisms of $\ncomb(\cat{E}^K\times\CC)$ unchanged. For example, in the $2$-party $1$-round case depicted in~\eqref{pic:pairofcombs} and~\eqref{pic:bipartitecomb}, this amounts to allowing the second party to change their comb (and hence their resulting input $C_2$ and output $D_2$) arbitrarily, provided they do send and receive something of type $A_2$ and $B_2$ respectively into the shared resource.

We now give a formal definition of the attack model, but note that the intuitive definition above is sufficient for many purposes.

\begin{definition}[Attack model $\A$ on $\ncomb(\cat{E}^K \times \CC)$ ]\label{def:sourceAttack}
 We define an attack model $\A$ on $\ncomb(\cat{E}^K \times \CC)$ corresponding to $K$ honest parties and one malicious party. Consider a morphism in $\ncomb(\cat{E}^K \times \CC)$ given by an injection $\imath:\{1,\dots,m\}\rightarrow\{1,\dots,l\}$ and a $m$-tuple of morphisms in $\ncomb(\CC^k)$, $(g_0,\dots,g_m)$, that is 
\begin{equation*}
    (\imath,(g_0,\dots,g_m)):(A_i,B_i)_{i=1}^l\rightarrow (C,D)
\end{equation*}
Each $g_j$ is a morphism in $\cat{E}^K\times \CC$, and as such itself a tuple of morphisms in $\cat{E}$ and one in $\CC$. We write $\pi_j:\V{E}^K \times \CC \rightarrow\cat{E}$ for the $j$-the projection. The attack model is then defined as
\begin{equation*}
    \A((\imath,(g_0,\dots, g_m))):=\{(\imath,(h_0,\dots,h_m))|\pi_j(h_\ell)=\pi_j(g_\ell)\textrm{ for all }\ell\textrm{ and }1\leq j\leq K\}.
\end{equation*}
\end{definition}
In the following we study quantum state verification protocols with $K\geq 1$ honest clients and a dishonest source. The source is then modeled in $\CC$ and the clients correspond to $\cat{E}^K$ in Definition~\ref{def:sourceAttack}. 

Intuitively, security against a dishonest source means that for every attack $a$ in the attack model in $\A(\pi)$ applied on the protocol, there is an attack $b$ in the attack model $\A(\id[s])$ on the identity for the ideal resource such that $a$ applied on the real resources and $b$ on the ideal resource are indistinguishable up to $\varepsilon$. With this intuition and the attack model we defined before, we now define security formally.
\begin{definition}[Security against the source]
Let $\V{E}$ be any sub-SMC in $\CC$. We further consider $F:\ncomb(\cat{E}^K\times \CC)\rightarrow \ncomb(\CC)$ being the injection $\V{E} \hookrightarrow \CC$ followed by the $(K+1)$-fold tensor product and $R:\ncomb(\CC)\rightarrow \V{Met}$ given by $\ncomb(\CC)(I,-)$. A protocol (morphism in $\ncomb(\cat{E}^K\times \CC)$) $\pi:(A,\bar r) \rightarrow (B,\bar q)$ $\varepsilon$-securely implements $(B,\bar s)$ with an untrusted source if
\begin{align*}
    \forall a\in \A(\pi)\ \exists b\in \A(\id[B]): \hdia{RF(a) \bar r}{RF(b) \bar s}\leq \varepsilon,
\end{align*}
where $\A$ is the attack model as defined in Def. \ref{def:sourceAttack}.
\end{definition}
Apart from security, we need a definition of correctness, i.e. that the implementation is close to the ideal resource if all parties act honestly. 
\begin{definition}[Correctness]
    Let $\cat{E}$ be any sub-SMC in $\CC$. We further consider $F:\ncomb(\cat{E}^K\times \CC)\rightarrow \ncomb(\CC)$ being the injection $\V{E} \hookrightarrow \CC$ followed by the $(K+1)$-fold tensor product and $R:\ncomb(\CC)\rightarrow \cat{Met}$ given by $\ncomb(\CC)(I,-)$. A protocol (morphism in $\ncomb(\cat{E}^K\times \CC)$) $\pi:(A,\bar r) \rightarrow (B,\bar q)$ $\varepsilon$-correctly implements $(B,\bar s)$ if
\begin{align*}
     \hdia{RF(\pi) \bar r}{RF(\id[B]) \bar s}\leq \varepsilon.
\end{align*}
\end{definition}

\subsection{Ideal resource}

To prove that there is no efficient and secure quantum state verification, we also need to define the ideal resource. The ideal resource we consider is the same as in \cite{colisson2024graph}.\\ 
To ease the reading of the ideal resource as morphisms, we will write out tensor units explicitly when they represent the input or output of parties. Further, we use the tensor unit as a constant signal, which can either be the start signal or the abort/end signal, depending if it is the input or output of the morphism.

\begin{algorithm}
    \algcap{Ideal resource}{$\S^{QSV}_{\phi, K}$, the ideal quantum state verification resource for $K$ clients and one source.
}
    \label{IR:QSV}
    \begin{algorithmic}
        \INPUT{The clients input $I$.}
        \INPUT{The source inputs $c\in \{0, 1\}$.}
        \IF{$c=0$}
            \STATE $\xi \gets \phi$
        \ELSE
            \STATE $\xi \gets I^{\otimes n}$
        \ENDIF
        \ENSURE The client receives $\xi$.
    \end{algorithmic}
\end{algorithm}
\begin{remark}
    In the category $\CPTP$, morphisms are quantum channels defined on finite-dimensional $C^*$-algebras, or concretely on direct sums of matrix algebras $M_n(\Cc)$. In quantum cryptography, however, we work with density matrices. Density matrices form a subset of all complex matrices, that is $D\left(\bigotimes_{i=1}^k \Cc^{n_i}\right)\subseteq \bigotimes_{i=1}^k M_{n_i}(\Cc)$ for any tuple $(n_1,\dots, n_k)\in\mathbb{N}^k$. Moreover, a quantum channel always maps density operators to density operators. Therefore, all the preliminaries in Section \ref{sec:QITPrelim} also hold for the morphisms in the category $\CPTP$. In the subsequent work we can therefore restrict our analysis to density operators while still working with the morphisms in $\CPTP$.
\end{remark}

\begin{definition}[Quantum state verification]\label{def:QSV}
    Let $\S^{QSV}_{\phi, K}$ be the \emph{quantum state verification} resource for $K$ clients $C=\{i\}_{i=1}^K$, a source $S$ and a target state 
    $\phi \in D\left(\bigotimes_{i=1}^K\Cc^{n_i}\right)$. 
    The source decides with their input $c\in\{0, 1\}$ if the clients receive the target state $\phi$ or the tensor unit $I$. As a morphism, we can type $S^{QSV}_{\phi, K}$ as follows \begin{equation*}
        \S^{QSV}_{\phi, K}: \left(\bigotimes_{i=1}^K I\right) \otimes \left(I\oplus I\right)\rightarrow \left(\left(\bigotimes_{i=1}^K M_{n_i}(\Cc)\right) \oplus \left(\bigotimes_{i=1}^K I\right)\right) \otimes I.
    \end{equation*} We show the ideal resource in \textbf{Ideal resource \ref{IR:QSV}}.
\end{definition}

We further introduce the 1-comb $ t_{\sharp_S}$ that takes the role of a filter, i.e. an operation applied to the ideal resource that shields access that should be only available to dishonest parties. We apply a filter in the honest case to ensure that the source cannot force the ideal resource to abort. Again, we write tensor units explicitly to represent the inputs or outputs of the different parties:
\begin{align*}
    t_{\sharp_S} = \left(I^{\otimes K+1}\to I^{\otimes K} \otimes \begin{pmatrix}
        1 & 0\\
        0 & 0
    \end{pmatrix},\id[\left(\left(\bigotimes_{i=1}^K M_{n_i}(\Cc)\right) \oplus \left(\bigotimes_{i=1}^K I\right)\right) \otimes I]\right).
\end{align*}
We define $\sharp_S = (1\mapsto 1, t_{\sharp_S})$.\\
Now we have all definitions we need to define $\varepsilon$-implementations of quantum state verification.
\begin{definition}[Implementation]\label{def:eps_implementation}
    Let $\bar r$ be any sequence of resources, and $\pi$ be a protocol in form of a morphism in $\ncomb(\V{E}^K\times \CC)$ applicable to $\bar a$. We say $\pi(\bar a)$ is an $\varepsilon$-implementation for quantum state verification if 
    \begin{itemize}
        \item $\pi$ $\varepsilon$-correctly implements $\sharp_S\left(\S^{QSV}_{\phi, K}\right)$ and
        \item $\pi$ $\varepsilon$-securely implements $\S^{QSV}_{\phi, K}$
    \end{itemize}   
    from $\bar r$.
\end{definition}

We still need to list the resources we are considering for the implementation. While we shouldn't be overly restrictive, we must ensure that the clients cannot use these resources alone to fully prepare the target state. Nevertheless, the resource should provide the necessary communication structure. Motivated by this contrast, we describe the abilities provided by the resources and their restrictions below.
\begin{itemize}
    \item $\W$ is a resource allowing the clients to coordinate their verification. It is assumed, that $\W$ either doesn't allow to or is not used to distribute the output state.
    \item $\Q$ is a quantum communication channel from the source to the client in single-client quantum verification.
    \item $\T$ is a quantum communication channel from the source to all clients and allows some quantum communication among the clients. Nevertheless, the graph representing the connectivity of the clients is not connected. We assume $\T$ for quantum state verification. 
    \item $\VV$ is a resource allowing the clients to sample whether they query another state and reveals the decision to the source. We assume once, $\VV$ outputted that no further state should be queried, the parties don't use it or ignore its outputs.
\end{itemize}
\section{No-Go result}

\subsection{Simple protocols}
We first consider a simple type of quantum state verification protocols. In this simple setting, an honest source sends $N+1$ copies of the ideal state to the client(s). The client(s) perform a measurement on a random subset of size $N$. If the measurement outcome is $0$ they accept the verification and output the remaining state to the environment. If the measurement outcome is $1$, they output the abort signal $I$ to the environment. 
\begin{definition}[Simple protocol type]\label{def:simple}
    Let $N\geq 0$ be an integer, $\eta\in I^{\oplus N+1}$ a probability distribution, 
    $\phi\in D\left(\bigotimes_{i=1}^K\Cc^{n_i}\right)$ 
    the target state for $K\geq 1$ client(s) and 
    $\mu:\left(\bigotimes_{i=1}^KM_{n_i}(\Cc)\right)^{\otimes N}\rightarrow I\oplus I$ 
    a measurement. $\pi^{SP}$ is defined by the two algorithms $\pi^{SP}_S$ and $\pi^{SP}_C$, where $\pi^{SP}_S$ describes the protocol followed by the source preparing the states and $\pi^{SP}_C$ the protocol followed by the client(s) to verify the states.
    \begin{algorithm}
        \algcap{Protocol}{The protocol $\pi^{SP}$ of the source and the (joint) protocol of the client(s). $N$, $K$ and $\phi$ are publicly known and fixed per protocol instance.}\vspace{1.5em}
        \textbf{Source's protocol $\pi^{SP}_S$:}
        \begin{algorithmic}[1]
            \STATE Prepare $N+1$ copies of the target state, i.e. $I \rightarrow \phi^{\otimes (N+1)}$.
            \STATE Send these copies to the client(s).
        \end{algorithmic}
        \vspace{1em}
        \textbf{Client's protocol $\pi^{SP}_C$:}
        \begin{algorithmic}[1]
            \STATE The client(s) receive their respective share of each of the $N+1$ states in $D\left(\bigotimes_{i=1}^K\Cc^{n_i}\right)$, i.e. $ I\rightarrow \bigotimes_{i=1}^{N+1}\rho_i$.
            \STATE The client(s) sample the output register: $r\gets \eta$
            \IF{$r=1$}
                \STATE $\bigotimes_{i=1}^{N+1}\rho'_i\gets \texttt{MOVE-BACK}_{N+1, 1}(\bigotimes_{i=1}^{N+1}\rho_i)$.
            \ELSIF{$r=...$}
                \STATE $\vdots$
            \ELSE
                \STATE $\bigotimes_{i=1}^{N+1}\rho'_i\gets \texttt{MOVE-BACK}_{N+1, N+1}(\bigotimes_{i=1}^{N+1}\rho_i)$.
            \ENDIF
            \STATE Perform $\texttt{FORGET-BRANCH}_{N+1, \Hi_c}$ and get $\bigotimes_{i=1}^{N+1}\rho''_i$. 
            \STATE Perform the measurement $\mu$ on the first $N$ registers, the result is $s$. The remaining register is now called $\rho'''$.
            \IF{$s=0$}
                \STATE Output $\rho'''$, distributed to the clients.
            \ELSE
                \STATE Output $\Tr(\rho''')$ to each client.
            \ENDIF
        \end{algorithmic}
        \end{algorithm} 
\end{definition}

Where the else-if structure, $\texttt{MOVE-BACK}$, and $\texttt{FORGET-BRANCH}$ are defined in Definitions~\ref{def:elif},~\ref{def:move-back}, and~\ref{def:forget-branch}. With the definition of the simple protocol type, we can derive our first result. We show that there is no composably secure single- or multi-client quantum state verification protocol that is efficient. Intuitively, both players, the client(s) and the distinguisher, are limited by the Holevo-Helstrom theorem. The crucial observation is, that the distinguisher can choose a state that is likely to pass the test of the client(s), but is still non-negligibly far away from the target state. 
We start with single-client state verification, i.e. we consider a single client, who is not able to prepare a state at all. 
For any simple protocol $\pi$ with $N+1$ rounds we show that $\pi$ cannot be an $negl(N)$-implementation as defined in \ref{def:eps_implementation}, \ie $\pi$ cannot be negligible close to the ideal functionality and efficient.

\begin{theorem}[No efficient single-client state verification with fixed number of rounds] \label{thm:STSimple}
    Let $\pi = \{\pi_S, \pi_C\}$ be a simple protocol (see Def.\ref{def:simple}). Then there exists a morphism $\V{A}_{\rho}\in\A(\pi)$ such that for all $\V{B}_{\rho}\in\A\left(\id[\S^{QSV}_{\phi, 1}]\right)$ it holds 
    \begin{align*}
   \hdia{RF(\pi)(\bar a)}{ RF(\sharp_S)\left(\S^{QSV}_{\phi, 1}\right)} + \hdia{RF(\V{A}_{\rho})(\bar a)}{ RF(\V{B}_{\rho})\left(\S^{QSV}_{\phi, 1}\right)}\geq \varepsilon,
    \end{align*}
    where $\bar a = (\Q)^{\times N+1}$ consists of $N+1$ copies on quantum communication from the source to the client, $\varepsilon=\nicefrac{1}{27N}$, if $\phi$ is mixed and $\varepsilon=\nicefrac{1}{8\sqrt{N}}$, if $\phi$ is pure.
\end{theorem}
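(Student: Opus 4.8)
The plan is to exhibit one malicious‑source attack $\V{A}_\rho$ and show it defeats correctness and security simultaneously. Concretely, $\V{A}_\rho$ behaves like the honest source except that it prepares $N+1$ copies of a state $\rho$ in place of $\phi$ and sends them through the channels $\bar a=(\Q)^{\times N+1}$. The only free parameter is $\rho$, which I will fix at the end; the intuition (following the Holevo--Helstrom heuristic flagged before the theorem) is that $\rho$ should be close enough to $\phi$ that the client's test $\mu$ still accepts with substantial probability, yet far enough that an accepted output is distinguishable from $\phi$. Since the first summand in the claimed inequality is the correctness error of the honest $\pi$ and does not involve $\V{B}_\rho$, it suffices to compute the correctness error and to \emph{minimise} the security distance over all simulators $\V{B}_\rho\in\A(\id[\S^{QSV}_{\phi,1}])$.

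First I would unwind both distances. Because the $N+1$ copies are identical, the \textsf{move-back}/\textsf{forget-branch} steps leave the retained register in state $\rho$ uncorrelated with the $N$ measured registers, so the real process outputs $\rho$ with probability $p_\rho:=\Pr[\mu=0\mid\rho^{\otimes N}]$ and aborts otherwise; the honest run outputs $\phi$ with probability $p_\phi:=\Pr[\mu=0\mid\phi^{\otimes N}]$ and aborts otherwise. As accept and abort occupy orthogonal summands of the output algebra $M_{n_1}(\Cc)\oplus\Cc$, the trace norm splits blockwise. Against the filtered ideal $\sharp_S(\S^{QSV}_{\phi,1})$, which always outputs $\phi$, this gives a correctness error controlled by $1-p_\phi$. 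By Definition~\ref{def:sourceAttack} and the rigid type of $\S^{QSV}_{\phi,1}$, the only freedom of $\V{B}_\rho$ is to feed the source's control bit $c$ (accept probability $q$); it can never make the client output anything but $\phi$ on accept. A triangle‑inequality computation then yields, up to the fixed normalisation constant $c_0$ of $\hdia$,
\begin{equation*}
\min_{q}\big(\|p_\rho\rho-q\phi\|_1+|q-p_\rho|\big)=p_\rho\|\rho-\phi\|_1,
\end{equation*}
attained at $q=p_\rho$, so the security error obeys $\varepsilon_D\geq c_0\,p_\rho\|\rho-\phi\|_1$.

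Next I would link the two acceptance probabilities: since $\mu$ is a measurement, Holevo--Helstrom gives $|p_\rho-p_\phi|\leq \tfrac12\|\rho^{\otimes N}-\phi^{\otimes N}\|_1=:D_N$, hence $p_\rho\geq p_\phi-D_N$. Writing $t:=\tfrac12\|\rho-\phi\|_1$ and combining with the correctness term (dropping a nonnegative $\varepsilon_H(1-t)$ contribution) gives
\begin{equation*}
\varepsilon_H+\varepsilon_D\;\geq\;2c_0\,t\,(1-D_N).
\end{equation*}
It remains to bound $D_N$ and optimise $t$. For mixed $\phi$ I would use the right‑hand bound of Lemma~\ref{lemma:Multi_copy_dist}, $D_N\leq\sqrt{1-(1-2t)^N}\leq\sqrt{2Nt}$, and maximise $2t(1-\sqrt{2Nt})$; the optimum $\sqrt{2Nt}=2/3$ produces the denominator $27$ and the $1/N$ scaling. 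For pure $\phi$ I would instead use the exact identity~\eqref{eq:pureState}: writing $t=\sqrt{1-|\braket{\psi}{\phi}|^2}$ gives $D_N=\sqrt{1-(1-t^2)^N}\leq\sqrt{N}\,t$, and maximising $2t(1-\sqrt N\,t)$ (optimum $\sqrt N\,t=1/2$) yields the $1/\sqrt N$ scaling; tuning the constants gives $\tfrac{1}{8\sqrt N}$. One checks a realising $\rho$ exists: any pure state of the prescribed overlap in the pure case, and a small positive perturbation of $\phi$ in the mixed case, where the weaker trace‑distance bound is precisely what forces the worse $1/N$ rate.

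The main obstacle I anticipate is the security step: proving that the infimum over \emph{all} admissible simulators $\V{B}_\rho$ genuinely collapses to the one‑parameter optimisation over $q$. This requires reading off, from the attack model and the type of $\S^{QSV}_{\phi,1}$, that the simulator is confined to the source's control interface and cannot alter the client's accepted output away from $\phi$; once this is established the remainder is the elementary optimisation above. A secondary point is to justify that the diamond distance of these resource conversions reduces to the blockwise trace‑norm computation used here, which follows from the shortness of comb‑filling established for $R=\ncomb(\CC)(I,-)$ earlier in the excerpt together with property~\eqref{eq:diamond_distance_composition}.
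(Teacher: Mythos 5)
Your proposal is correct, and its skeleton coincides with the paper's proof: the same i.i.d.\ attack $\V{A}_{\rho}$ (send $N+1$ copies of $\rho$, run the client's protocol honestly), the same type-based argument that any simulator $\V{B}_{\rho}\in\A(\id[\S^{QSV}_{\phi,1}])$ reduces to a single acceptance probability for the control bit $c$, the same Holevo--Helstrom link between the two acceptance probabilities, and the same endgame (Lemma~\ref{lemma:Multi_copy_dist} plus Bernoulli's inequality $(1-\beta/k)^k\geq 1-\beta$ for mixed $\phi$, the exact overlap identity~\eqref{eq:pureState} for pure $\phi$, then optimizing $t=\tfrac12\|\rho-\phi\|_1$ at scale $\nicefrac{1}{N}$ resp.\ $\nicefrac{1}{\sqrt N}$, with the same critical points $\sqrt{2Nt}=\nicefrac23$ and $\sqrt{N}t=\nicefrac12$). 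Where you genuinely diverge is the derivation of the security term. The paper never writes down the output states; it post-composes both channels with measurement channels $\M_D=\{\gamma(0)\oplus 1,\gamma(1)\oplus 0\}$, first takes $\gamma(1)=\mathbbm{1}$ to control the mismatch $\delta(N,\rho)$ between the simulator's acceptance probability and $\braket{\mu(0)}{\rho^{\otimes N}}$, and then, for the Helstrom-optimal $\gamma$, absorbs $|\delta|$ back into the security error through the self-referential inequality $\varepsilon_D\geq\cdots-\varepsilon_D$; this costs a factor of two and gives $\varepsilon_D\geq\tfrac14\braket{\mu(0)}{\rho^{\otimes N}}\|\rho-\phi\|_1$. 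Your route instead computes the outputs exactly as $p_\rho\rho\oplus(1-p_\rho)$ versus $q\phi\oplus(1-q)$, uses that the trace norm splits over the direct sum (and that the diamond distance collapses to it because the inputs are trivial), and minimizes over $q$ in closed form at $q=p_\rho$, yielding the tighter $\varepsilon_D\geq\tfrac12\braket{\mu(0)}{\rho^{\otimes N}}\|\rho-\phi\|_1$. Both versions prove the theorem; yours is more elementary and would in fact give the stronger constants $\nicefrac{2}{27N}$ and $\nicefrac{1}{4\sqrt N}$, while the paper's measurement-based formulation is the one that carries over to the general protocols (Theorem~\ref{thm:general_multi}), where the accept branch may also contain a client-prepared state $\chi$ and the argument hinges on choosing the direction of $\gamma(1)$ relative to $\rho$, $\chi$ and $\phi$. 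Finally, the two obstacles you flag at the end are resolved in the paper exactly as you anticipate: the simulator's confinement to the source's control interface is read off from Definition~\ref{def:sourceAttack} and the typing of $\S^{QSV}_{\phi,1}$ (no output wire on the source side, honest client side), and no further work is needed for the metric reduction.
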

\begin{proof}
    First, we consider the setting, where the source is honest, i.e.~$\hdia{RF(\pi)(\bar a)}{ RF(\sharp_S)\left(\S^{QSV}_{\phi, 1}\right)}$. Half the diamond distance being the distinguishing advantage, we know that this quantity is lower bounded by the difference of the probabilities that the real and ideal resource output $0$, that is
    \begin{align}\label{eq:simple_hon_measbound}
        \hdia{RF(\pi)(\bar a)}{ RF(\sharp_S)\left(\S^{QSV}_{\phi, 1}\right)} \geq \left|Pr[\left(\pi(\bar a)\right)(I)=0] -
        Pr\left[\left(\sharp_S\left(\S^{QSV}_{\phi, 1}\right)\right)(I)=0\right]\right|.
    \end{align}
    We consider the channel $\M=\Tr_{M_n(\Cc)}\oplus \id[I]$. If we apply $\M$ to the resource's output, we find 
    \begin{align*}
        Pr[\left(\M\circ \pi(\bar a)\right)(I)=0] &= \braket{\mu(0)}{\phi^{\otimes N}}\\
        Pr\left[\left(\M\circ \sharp_S\left(\S^{QSV}_{\phi, 1}\right)\right)(I)=0\right] &= 1.
    \end{align*}
    However the diamond distance is contractive with respect to CPTP maps, i.e. $\M$ cannot increase the diamond distance, which implies
    \begin{align}\label{eq:bound_simple_honest}
        &\hdia{RF(\pi)(\bar a)}{ RF(\sharp_S)\left(\S^{QSV}_{\phi, 1}\right)} \nnn &\geq \hdia{RF(\M\circ\pi)(\bar a)}{RF(\M\circ \sharp_S)\left(\S^{QSV}_{\phi, 1}\right)} \geq 1-\braket{\mu(0)}{\phi^{\otimes N}}.
    \end{align}
    If the source is dishonest, we use the family of attacks $\{\V{A}_{\rho}\}_{\rho\in D(\Cc^n)}$. We apply these attacks to $\Q^{\times N+1}$ and obtain a channel from $I$ to $M_n(\Cc) \oplus I$. 
    $\V{A}_{\rho}$ inputs $N+1$ copies of as state $\rho\in D(\Cc^n)$ and implements $\pi_C$ on the client side.
    To bound $\hdiail{RF(\V{A}_{\rho})(\bar a)}{ RF(\V{B}_{\rho})\left(\S^{QSV}_{\phi, 1}\right)}$ we need to model any attack on $\id[\S^{QSV}_{\phi, 1}]$. For the input part of the comb we consider any channel from $I$ to $I\oplus I$. The output part can only be the identity, since there is no output on the side of the source and the client is honest. Hence, $\V{B}_{\rho}$ can only input a single classical bit $c$ to the ideal resource $\S^{QSV}_{\phi, 1}$. Therefore, we can assume without loss of generality, that there is a probability $p_{N}(\rho)$ for the simulator to input $c=1$, i.e. $p_{N}(\rho) = Pr\left[c=0\mid c\gets \V{B}_{\rho}\left(\S^{QSV}_{\phi, 1}\right)\right]$.
    As in the honest case, we want to apply a measurement channel to the outputs of the channels obtained by $\V{A}_{\rho}(\bar a)$ and $\V{B}_{\rho}\left(\S^{QSV}_{\phi, 1}\right)$. Let $\M_D= \{\gamma(0)\oplus 1, \gamma(1) \oplus 0\}$ be a measurement channel, where $\gamma$ is an arbitrary binary measurement. It follows\footnote{Note, that the attack is i.i.d., hence the probability distribution $\eta$ used by the client(s), does not occur in the analysis.}
    \begin{align*}
        Pr[\M_D\circ\V{A}_{\rho}(\bar a)(I)=1] &= \braket{\gamma(1)}{\rho}\braket{\mu(0)}{\rho^{\otimes N}}\\
        Pr\left[\M_D \circ \V{B}_{\rho}\left(\S^{QSV}_{\phi, 1}\right)(I)=1\right] &= \braket{\gamma(1)}{\phi}p_{N}(\rho).
    \end{align*}
    And the Holevo-Helstrom Theorem implies
    \begin{align}\label{eq:simple_dishon_meas1}
        \hdia{RF(\V{A}_{\rho})(\bar a)}{ RF(\V{B}_{\rho})\left(\S^{QSV}_{\phi, 1}\right)} \geq \left|\braket{\gamma(1)}{\rho}\braket{\mu(0)}{\rho^{\otimes N}} - \braket{\gamma(1)}{\phi}p_{N}(\rho)\right|.
    \end{align}
    When considering $\gamma(1) = \mathbbm{1}_{\Cc^n}$, we find
    \begin{align*}
        \hdia{RF(\V{A}_{\rho})(\bar a)}{ RF(\V{B}_{\rho})\left(\S^{QSV}_{\phi, 1}\right)} \geq |\braket{\mu(0)}{\rho^{\otimes N}}-p_N(\rho)|.
    \end{align*}
    We denote $p_N(\rho) = \braket{\mu(0)}{\rho^{\otimes N}}+\delta(N, \rho)$ and find
    \begin{align*}
        \hdia{RF(\V{A}_{\rho})(\bar a)}{ RF(\V{B}_{\rho})\left(\S^{QSV}_{\phi, 1}\right)} \geq |\delta(N, \rho)|.
    \end{align*}
    For an arbitrary $\gamma(1)$, we then have
    \begin{align*}
        &\hdia{RF(\V{A}_{\rho})(\bar a)}{ RF(\V{B}_{\rho})\left(\S^{QSV}_{\phi, 1}\right)} \nnn\geq & |\braket{\mu(0)}{\rho^{\otimes N}}\braket{\gamma(1)}{\rho}-p_N(\rho)\braket{\gamma(1)}{\phi}|\\ =& |\braket{\mu(0)}{\rho^{\otimes N}}(\braket{\gamma(1)}{\rho}-\braket{\gamma(1)}{\phi}) - \delta(N, \rho)\braket{\gamma(1)}{\phi}|\\
        \geq & |\braket{\mu(0)}{\rho^{\otimes N}}(\braket{\gamma(1)}{\rho}-\braket{\gamma(1)}{\phi})|-|\delta(N, \rho)|\\
        \geq & \braket{\mu(0)}{\rho^{\otimes N}}|\braket{\gamma(1)}{\rho}-\braket{\gamma(1)}{\phi}|-\hdia{RF(\V{A}_{\rho})(\bar a)}{ RF(\V{B}_{\rho})\left(\S^{QSV}_{\phi, 1}\right)}.
    \end{align*}
    If we consider the optimal measurement $\{\gamma(0), \gamma(1)\}$ to distinguish $\rho$ and $\phi$, i.e., the measurement that saturates the Holevo-Helstrom bound, then the above inequality is yields
    \begin{align}\label{eq:bound_simple_dishonest}
        \hdia{RF(\V{A}_{\rho})(\bar a)}{ RF(\V{B}_{\rho})\left(\S^{QSV}_{\phi, 1}\right)}\geq \frac{1}{2}\braket{\mu(0)}{\rho^{\otimes N}}\frac{1}{2}\|\rho-\phi\|_1.
    \end{align}
    Adding both honesty-setting (\ref{eq:bound_simple_honest}) and (\ref{eq:bound_simple_dishonest}), the Holevo-Helstrom bound yields
    \begin{align}
        &\hdia{RF(\pi)(\bar a)}{ RF(\sharp_S)\left(\S^{QSV}_{\phi, 1}\right)} + \hdia{RF(\V{A}_{\rho})(\bar a)}{ RF(\V{B}_{\rho})\left(\S^{QSV}_{\phi, 1}\right)} \nnn
        &\geq \left(1-\braket{\mu(0)}{\phi^{\otimes N}}\right) + \frac{1}{2}\braket{\mu(0)}{\rho^{\otimes N}}\frac{1}{2}\|\rho-\phi\|_1\nnn
        &\geq \frac{1}{4}\|\rho-\phi\|_1 \left(1-\left(\braket{\mu(0)}{\rho^{\otimes N}} - \braket{\mu(0)}{\phi^{\otimes N}}\right)\right)\nnn
        &\geq \frac{1}{4}\|\rho-\phi\|_1 \left(1-\frac{1}{2}\|\rho^{\otimes N}-\phi^{\otimes N}\|_1\right).\label{eq:boundOfDist}
    \end{align}
    Depending on whether target state $\phi$ is pure or mixed we obtain a different bound. We first consider the mixed case. Using Lemma \ref{lemma:Multi_copy_dist}, we find
    \begin{align*}
        &\hdia{RF(\pi)(\bar a)}{ RF(\sharp_S)\left(\S^{QSV}_{\phi, 1}\right)} + \hdia{RF(\V{A}_{\rho})(\bar a)}{ RF(\V{B}_{\rho})\left(\S^{QSV}_{\phi, 1}\right)}\\
        &\geq \frac{1}{4}\|\rho-\phi\|_1 \left(1-\sqrt{1-(1-\|\rho-\phi\|_1)^N}\right)
    \end{align*}
    Now we fix $\V{A}_{\rho}$ such that $\frac{1}{2}\|\rho-\phi\|_1 = \nicefrac{\alpha}{N}$, which give us
    \begin{align*}
        &\hdia{RF(\pi)(\bar a)}{ RF(\sharp_S)\left(\S^{QSV}_{\phi, 1}\right)} + \hdia{RF(\V{A}_{\rho})(\bar a)}{ RF(\V{B}_{\rho})\left(\S^{QSV}_{\phi, 1}\right)}\\
        &\geq \frac{\alpha}{2N} \left(1-\sqrt{1-\left(1-\frac{2\alpha}{N}\right)^N}\right)\\
        &\geq \frac{\alpha}{2N} \left(1-\sqrt{2\alpha}\right),
    \end{align*}
    where we used $(1-\frac{\beta}{k})^{k}\geq 1-\beta$ for $k\in\mathbb{N}$ and $|\beta|\leq k$.
    This is maximized for $\alpha=\nicefrac{2}{9}$ which gives
    \begin{align}\label{eq:fin_bound_mixed}
        \hdia{RF(\pi)(\bar a)}{ RF(\sharp_S)\left(\S^{QSV}_{\phi, 1}\right)} + \hdia{RF(\V{A}_{\rho})(\bar a)}{ RF(\V{B}_{\rho})\left(\S^{QSV}_{\phi, 1}\right)} \geq \frac{1}{27N}.
    \end{align}
    Next, we consider $\phi = \ketbra{\phi}$ to be a pure state, and we can choose a pure state $\rho=\ketbra{\psi}$ as well. Using (\ref{eq:pureState}) plugged into (\ref{eq:boundOfDist}) we obtain
    \begin{align*}
        &\hdia{RF(\pi)(\bar a)}{ RF(\sharp_S)\left(\S^{QSV}_{\phi, 1}\right)} + \hdia{RF(\V{A}_{\rho})(\bar a)}{ RF(\V{B}_{\rho})\left(\S^{QSV}_{\phi, 1}\right)}\\
        &\geq \frac{1}{2} \sqrt{1-|\braket{\psi}{\phi}|^2} \left(1-\sqrt{1-|\braket{\psi}{\phi}|^{2N}}\right).
    \end{align*}
    Replacing $\sqrt{1-|\braket{\psi}{\phi}|^2}$ with $\tau$ yields
    \begin{align*}
        &\hdia{RF(\pi)(\bar a)}{ RF(\sharp_S)\left(\S^{QSV}_{\phi, 1}\right)} + \hdia{RF(\V{A}_{\rho})(\bar a)}{ RF(\V{B}_{\rho})\left(\S^{QSV}_{\phi, 1}\right)}\\
        &\geq \frac{\tau}{2} \left(1-\sqrt{1-(1-\tau^2)^N}\right).
    \end{align*}
    Now we choose $\ket{\psi}$ such that $\tau=\nicefrac{1}{2\sqrt{N}}$ and we find again using $(1-\frac{\beta}{k})^{k}\geq 1-\beta$ for $k\in\mathbb{N}$ and $|\beta|\leq k$
    \begin{align}\label{eq:fin_bound_pure}
        &\hdia{RF(\pi)(\bar a)}{ RF(\sharp_S)\left(\S^{QSV}_{\phi, 1}\right)} + \hdia{RF(\V{A}_{\rho})(\bar a)}{ RF(\V{B}_{\rho})\left(\S^{QSV}_{\phi, 1}\right)}\geq \frac{1}{8\sqrt{N}}.
    \end{align}\qed
\end{proof}
We can also extend the result for simple protocols for multi-client quantum state verification. 

\begin{theorem}[No efficient secure state verification with fixed number of rounds for entangled states]\label{thm:QSVsimple}
    Let $\pi = \{\pi_S\}\cup\{\pi_i\}_{i\in C}$ be a simple protocol as defined in Def. \ref{def:simple}. Then there exists a morphism $\V{A}_{\rho}\in\A(\pi)$ such that for all $\V{B}_{\rho}\in\A\left(\id[\S^{QSV}_{\phi, K}]\right)$ it holds 
    \begin{align*}
       \hdia{\pi(\bar a)} {\sharp_S\left(\S^{QSV}_{\phi, K}\right)} + \hdia{\V{A}_{\rho}(\bar a)}{ \V{B}_{\rho}\left(\S^{QSV}_{\phi, K}\right)}\geq \varepsilon,
    \end{align*}
    where $\bar a = (\T^{\times N+1}, \W)$ consits of $N+1$ copies of $\T$ used to distribute the states and one copy of $\W$ used to coordinate the verification task. Further, we find $\varepsilon=\nicefrac{1}{27N}$, if $\phi$ is mixed and $\varepsilon=\nicefrac{1}{8\sqrt{N}}$, if $\phi$ is pure.
\end{theorem}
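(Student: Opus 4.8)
The plan is to follow the single-client argument of Theorem~\ref{thm:STSimple} almost verbatim, exploiting that none of the estimates there are sensitive to the tensor-product structure of the underlying Hilbert space: the only change is that the target state $\phi$ and the adversarial state $\rho$ now live on the composite space $\bigotimes_{i=1}^K\Cc^{n_i}$ instead of a single $\Cc^n$, and that the consumed resources are $(\T^{\times N+1},\W)$ rather than $\Q^{\times N+1}$. I would first fix the attack family $\{\V{A}_{\rho}\}_{\rho\in D(\bigotimes_{i=1}^K\Cc^{n_i})}$ in which the dishonest source prepares $N+1$ i.i.d.\ copies of a single (possibly entangled) state $\rho$ and distributes them through $\T^{\times N+1}$, while the honest clients run $\pi_C$ and coordinate through $\W$. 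Since the attack is i.i.d., the clients' sampling distribution $\eta$ again drops out of the analysis.

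For the honest branch I would apply the measurement channel $\M=\Tr_{\bigotimes_{i=1}^K M_{n_i}(\Cc)}\oplus\id[\bigotimes_{i=1}^K I]$, which discards the output state and retains only the accept/reject flag. By contractivity of the diamond distance under $\M$ together with the Holevo--Helstrom theorem this reproduces
\begin{equation*}
\hdia{\pi(\bar a)}{\sharp_S\left(\S^{QSV}_{\phi, K}\right)}\geq 1-\braket{\mu(0)}{\phi^{\otimes N}},
\end{equation*}
exactly as in~(\ref{eq:bound_simple_honest}).

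For the dishonest branch I would use that, the clients being honest, any simulator $\V{B}_{\rho}$ can only input a single classical bit $c$ into $\S^{QSV}_{\phi, K}$, so that $p_N(\rho)=\Pr[c=0]$ is its sole degree of freedom. Applying the measurement channel $\M_D$ assembled from the optimal state-discrimination measurement between $\rho$ and $\phi$ and repeating the manipulations of~(\ref{eq:simple_dishon_meas1})--(\ref{eq:bound_simple_dishonest}) yields
\begin{equation*}
\hdia{\V{A}_{\rho}(\bar a)}{\V{B}_{\rho}\left(\S^{QSV}_{\phi, K}\right)}\geq \frac{1}{2}\braket{\mu(0)}{\rho^{\otimes N}}\frac{1}{2}\|\rho-\phi\|_1.
\end{equation*}
Summing the two branches reproduces~(\ref{eq:boundOfDist}) with $\bigotimes_{i=1}^K\Cc^{n_i}$ in place of $\Cc^n$, whence the mixed and pure cases split exactly as before: Lemma~\ref{lemma:Multi_copy_dist} together with the choice $\frac{1}{2}\|\rho-\phi\|_1=\nicefrac{\alpha}{N}$ and $\alpha=\nicefrac{2}{9}$ gives $\nicefrac{1}{27N}$ for mixed $\phi$, while~(\ref{eq:pureState}) with $\tau=\nicefrac{1}{2\sqrt N}$ gives $\nicefrac{1}{8\sqrt N}$ for pure $\phi$.

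I expect the only genuine obstacle to be the very first step, namely verifying that $\V{A}_{\rho}$ is an admissible element of the attack model $\A(\pi)$ on $\ncomb(\CC^K\times\CC)$: one must check that the dishonest source, modelled in the unrestricted category $\CC$, can indeed realise the i.i.d.\ preparation and distribution of an entangled $\rho$ across the disconnected client graph using only $\T$, and that the coordination resource $\W$---which by assumption neither distributes nor helps prepare the output state---grants the honest clients no additional power that could circumvent the state-discrimination bound. Once this is settled, every remaining estimate transfers unchanged, since each depends only on the trace distance $\|\rho-\phi\|_1$ evaluated on the composite space $\bigotimes_{i=1}^K\Cc^{n_i}$.
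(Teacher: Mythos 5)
Your proposal is correct and takes essentially the same route as the paper, which proves Theorem~\ref{thm:QSVsimple} simply by observing that the argument of Theorem~\ref{thm:STSimple} repeats verbatim, since having multiple clients does not change any of the fundamental inequalities. Your extra check that $\V{A}_{\rho}$ is admissible in $\A(\pi)$ is a point the paper glosses over, but it is unproblematic because the dishonest source acts in the unrestricted category $\CC$ and can therefore prepare and distribute i.i.d.\ copies of any entangled $\rho$ through $\T^{\times N+1}$.
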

The proof of Theorem \ref{thm:QSVsimple} is just a repetition of the proof of Theorem \ref{thm:STSimple} -- having multiple clients does not change the fundamental inequalities we used.

\subsection{General protocols}\label{section:general_protocol}
Next, we consider general protocols of quantum state verification. In this setting, the client(s) sample $r$ and $i$, where $r$ is the number of states to be sent by the source and $i$, the number of those states that are then measured to verify the state. To this end the clients use a joint probability distribution $p(r,i)$. Again, if the measurement outcome is $0$, they accept the verification and output one of the remaining $r-i$ states to the environment. If the measurement outcome is $1$, they output the abort signal $I$ to the environment.

For the general protocol, we need to adapt our categorical model for the multi-client case. For the single-client case, the client is still not able to create a state and needs to output a state created by the distinguisher. This translates to $p(l,l)=0$ for all $l$. In the multi-client case, the clients can prepare a state. However, the state they can prepare is separable with respect to a partition with respect to which to the target state is entangled. Otherwise, the clients would not require an external source in the first place.

In the proof of Theorems \ref{thm:general_multi} and \ref{thm:general_single} we allow for an arbitrary number of verification rounds, i.e. $r$ and $i$ are arbitrary positive integers. However, in our categorical modeling, we set an upper bound on the number of rounds. The bound can be arbitrarily chosen and the two settings are then equivalent. Indeed let $D$ be the upper bound, then we can set $p(r,i)=0$ for all $r\leq D$. On the other hand, the upper bound $D$ can be chosen arbitrarily large and is therefore no significant restriction. The probabilities for $r>D$ can be chosen arbitrarily small, but non-zero, for every distribution.\\[1em]
With this general formulation of protocols, we find that any implementation of quantum state verification either has a non-negligible distance to the ideal functionality according to Def. \ref{def:eps_implementation} or is inefficient in the number of rounds.

\begin{theorem}[No efficient secure state verification for entangled states]\label{thm:general_multi}
    Let $\pi=\{\pi_S\}\cup\{\pi_i\}_{i\in C}$ be any protocol applied on resources $\bar a = (\T^{\times M},\VV^{\times M}, \W)$, and $D$ is an upper bound for the number of rounds. $\pi(\bar a)$ is an implementation with the following properties
    \begin{itemize}
        \item $K$ clients sample the number of rounds $r$ and the number of verification rounds $0\leq i\leq r$ from any joint distribution $p(r, i)$,
        \item the clients perform a measurement $\mu_{r, i}$ and accept the outcome if the result is $0$,
        \item if $r=i$, the clients prepare a state $\chi\in D\left(\bigotimes_{j=1}^K \Cc^{n_j}\right)$, where $\chi$ and the target state $\phi$ are separable respectively entangled with respect to a particular partition,
        \item  if $r\neq i$, the clients output one of the $r-i$ states drawn to any unspecified distribution.
    \end{itemize}
    It then follows that there exists $\V{A}_{\rho}\in\A(\pi)$ such that for all $\V{B}_{\rho}\in\A\left(\id[\S^{QSV}_{\phi, K}]\right)$ it holds that
    \begin{align*}
  \hdia{\pi(\bar a)}{\sharp_S\left(\S^{QSV}_{\phi, K}\right)} + \hdia{\V{A}_{\rho}(\bar a)}{ \V{B}_{\rho}\left(\S^{QSV}_{\phi, K}\right)}\geq \varepsilon,
    \end{align*}
    where $\varepsilon=\nicefrac{1}{27N}$, if $\phi$ is mixed and $\varepsilon=\nicefrac{1}{8\sqrt{N}}$, if $\phi$ is pure and $N$ is the expected number of rounds.
\end{theorem}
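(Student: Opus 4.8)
The plan is to lift the proof of Theorem~\ref{thm:STSimple} to a variable number of rounds and to the separable-state branches, keeping its two-term structure intact. Write $q_{\mathrm{hon}}=\sum_{r,i}p(r,i)\braket{\mu_{r,i}(0)}{\phi^{\otimes i}}$ and $q_{\mathrm{dis}}(\rho)=\sum_{r,i}p(r,i)\braket{\mu_{r,i}(0)}{\rho^{\otimes i}}$ for the acceptance probabilities when the source is honest, respectively when it sends the i.i.d.\ attack state $\rho$. Exactly as in~\eqref{eq:bound_simple_honest}, applying the accept/reject channel $\M$ and using contractivity of the diamond distance under CPTP maps gives the honest term $\hdia{\pi(\bar a)}{\sharp_S(\S^{QSV}_{\phi,K})}\ge 1-q_{\mathrm{hon}}$; the separable output in the $r=i$ branches can only increase this and is not needed here.

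For the dishonest term I would use the family $\{\V{A}_\rho\}$ that makes the source send $\rho$ in every round and then runs the honest client code. On an accepted run the output is $\rho$ when $r>i$ (an unmeasured i.i.d.\ copy) and the client-prepared separable state $\chi$ when $r=i$, so the conditional output is the mixture $\sigma_\rho=\tfrac1{q_{\mathrm{dis}}}(w_{>}\rho+w_{=}\chi)$ with $w_{>},w_{=}$ the accept weights of the two branch types. The key device for the entangled/separable case is to choose the attack on the segment $\rho_s=(1-s)\phi+s\chi$, which is a legitimate density operator with $\|\rho_s-\phi\|_1=s\|\chi-\phi\|_1$. Then $\rho_s-\phi$ is a positive multiple of $\chi-\phi$, so the measurement $\gamma$ saturating Holevo--Helstrom for $\phi$ versus $\rho_s$ is the same one that is optimal for $\phi$ versus $\chi$, and both branch contributions add instead of cancelling. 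Repeating the manipulation around~\eqref{eq:simple_dishon_meas1} verbatim, and using $\|\chi-\phi\|_1\ge\|\rho_s-\phi\|_1$, then yields $\hdia{\V{A}_{\rho}(\bar a)}{\V{B}_{\rho}(\S^{QSV}_{\phi,K})}\ge\tfrac14\,q_{\mathrm{dis}}\,\|\rho_s-\phi\|_1$, which is exactly the simple-protocol form, the separable branches having only helped.

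It remains to absorb the distribution over rounds. Combining the two terms as in the step leading to~\eqref{eq:boundOfDist}, the sum $\Sigma:=\hdia{\pi(\bar a)}{\sharp_S(\S^{QSV}_{\phi,K})}+\hdia{\V{A}_{\rho}(\bar a)}{\V{B}_{\rho}(\S^{QSV}_{\phi,K})}$ obeys $\Sigma\ge\tfrac14\|\rho_s-\phi\|_1\big(1-(q_{\mathrm{hon}}-q_{\mathrm{dis}})\big)$. Bounding each acceptance-difference term by Holevo--Helstrom and then by Lemma~\ref{lemma:Multi_copy_dist}, using $i\le r$ together with the concavity of $j\mapsto\sqrt{1-(1-x)^{j}}$ via Jensen's inequality, with $N=\sum_{r,i}p(r,i)\,r$ the expected number of rounds, I would obtain $q_{\mathrm{hon}}-q_{\mathrm{dis}}\le\sqrt{1-(1-\|\rho_s-\phi\|_1)^{N}}$. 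This reduces the problem to the single-copy optimisation already done in Theorem~\ref{thm:STSimple}: choosing $s$ so that $\tfrac12\|\rho_s-\phi\|_1=\nicefrac{\alpha}{N}$ and maximising over $\alpha$ gives $\nicefrac{1}{27N}$ in the mixed case. For a pure target one instead takes $\rho=\ketbra{\psi}$ pure, invokes~\eqref{eq:pureState} with the same Jensen step, and orients $\ket{\psi}$ at fixed $|\braket{\psi}{\phi}|$ (which alone controls the acceptance gap) on the $\chi$-side so that the separable branches again do not cancel, yielding $\nicefrac{1}{8\sqrt N}$.

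The multi-client bookkeeping (locality of the $K$ honest clients and the shared resources $\T,\VV,\W$) changes none of these inequalities, exactly as Theorem~\ref{thm:QSVsimple} repeats Theorem~\ref{thm:STSimple}. I expect the main obstacle to be precisely the $r=i$ branches: one must ensure that the separable output $\chi$ does not cancel the distinguishing advantage coming from $\rho$, which is exactly what the segment choice $\rho_s$ (mixed target) and the orientation of $\ket{\psi}$ (pure target) are designed to guarantee. The secondary subtlety is justifying the reduction to the expected round number $N$, which rests on the monotonicity $i\le r$ and the concavity used in the Jensen step.
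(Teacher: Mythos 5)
Your proposal is correct and follows essentially the same route as the paper's proof in Appendix~\ref{app2}: the honest-case bound via the accept/reject channel, an i.i.d.\ attack whose simulator coin is eliminated with the $\gamma(1)=\mathbbm{1}$ measurement, an attack state aligned with $\chi$ so that the $r=i$ and $r>i$ branches contribute with the same sign (your segment $\rho_s=(1-s)\phi+s\chi$ is a concrete instantiation of the paper's informal ``choose the direction of $\rho$ and $\gamma(1)$'' step, giving $\lambda=\min\bigl(\braket{\gamma(1)}{\chi}-\braket{\gamma(1)}{\phi},\,\braket{\gamma(1)}{\rho}-\braket{\gamma(1)}{\phi}\bigr)=\tfrac12\|\rho_s-\phi\|_1$), and the reduction to the expected round number $N$ via $i\le r$ and Jensen's inequality for the concave map $r\mapsto\sqrt{1-a^r}$, followed by the same optimization as in Theorem~\ref{thm:STSimple}. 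The one spot where you are exactly as informal as the paper is the pure-target case, where the claim that $\ket{\psi}$ can be oriented so the Holevo--Helstrom-optimal measurement for $\ketbra{\psi}$ versus $\phi$ does not anti-align with $\chi$ is asserted rather than argued in both treatments.
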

\begin{proof}[sketch]
    The proof is very similar to the proof of Thm. \ref{thm:STSimple}. Hence, we only sketch the proof idea and provide the formal proof in the appendix \ref{app2}.
    \begin{enumerate}
    \setlength\itemsep{1em}
        \item If the source is honest, we use the same measurement $\M$ as in (\ref{eq:simple_hon_measbound}) to find a lower bound for $\hdiail{\pi(\bar a)}{\sharp_S\left(\S^{QSV}_{\phi, K}\right)}$. 
        \item If the source is dishonest, we use a similar i.i.d. attack and the measurement $\M_D$ used for (\ref{eq:simple_dishon_meas1}), which yields a lower bound $\hdiail{\V{A}_{\rho}(\bar a)}{ \V{B}_{\rho}\left(\S^{QSV}_{\phi, K}\right)}$.
        \item Next, we consider $\gamma(1) = \mathbbm{1}_{\bigotimes_{i=1}^K\Cc^{n_i}}$, find a $\delta$ which we use to eliminate the acceptance probability of $\V{B}_{\rho}$.
        \item An argument based on  the direction of $\rho$ and $\gamma$ shows that $\min(\braket{\gamma(1)}{\chi}-\braket{\gamma(1)}{\phi}, \braket{\gamma(1)}{\rho}-\braket{\gamma(1)}{\phi})\geq 0$. We then bring both honesty configurations together.
        \item At last, we use Jensen's inequality, and the fact that $\lambda \geq \nicefrac{1}{2\sqrt{N}}\geq \nicefrac{2}{9N}$ in the asymptotic limit, as $\nicefrac{1}{2}\|\chi-\phi\|_1>0$ is constant, to deduce the same lower bounds as in (\ref{eq:fin_bound_mixed}) and  (\ref{eq:fin_bound_pure}).
    \end{enumerate}
\end{proof}
In fact, the same proof works for single-client quantum state verification as well. The difference is now that $p(l, l)=0$ for every $l\geq 0$ as the client is not able to prepare states at all. Hence, we provide a bound for single-client quantum state verification without restating the proof.
\begin{theorem}[No efficient secure single-client quantum state verification]\label{thm:general_single}
Let $\pi=\{\pi_C, \pi_S\}$ be any protocol applied to resources $\bar a = (\Q^{\times M}, \VV^{\times M}, \W)$ and $D$ be an upper bound on the number of rounds. $\pi(\bar a)$ is an implementation with the following properties
    \begin{itemize}
        \item the client samples the number of rounds $r$ and the number of verification rounds $0\leq i< r$ from any joint distribution $p(r,i)$,
        \item the client performs a measurement $\mu_{r, i}$ and accepts the outcome if the result is $0$,
        \item the client outputs one of the $r-i$ states drawn to any unspecified distribution.
    \end{itemize}
    It then follows that there exists $\V{A}_{\rho}\in\A(\pi)$ such that for $\V{B}_{\rho}\in\A\left(\id[\S^{QSV}_{\phi, 1}]\right)$ it holds
    \begin{align*}
    \hdia{\pi(\bar a)}{ \sharp_S\left(\S^{QSV}_{\phi, K}\right)} + \hdia{\V{A}_{\rho}}{ \V{B}_{\rho}\left(\S^{QSV}_{\phi, 1}\right)}\geq \varepsilon,
    \end{align*}
    where $\varepsilon=\nicefrac{1}{27N}$, if $\phi$ is mixed and $\varepsilon=\nicefrac{1}{8\sqrt{N}}$, if $\phi$ is pure and $N$ is the expected number of rounds.
\end{theorem}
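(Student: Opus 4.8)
The plan is to mirror the proof of Theorem~\ref{thm:STSimple} almost verbatim, replacing the fixed-round acceptance probability $\braket{\mu(0)}{\sigma^{\otimes N}}$ by its average $A(\sigma):=\sum_{r,i}p(r,i)\braket{\mu_{r,i}(0)}{\sigma^{\otimes i}}$ over the client's sampling distribution, and to absorb the randomness of the round number through Jensen's inequality at the very end. Since the single-client protocol forbids the client from preparing states, we have $p(l,l)=0$ for all $l$, so on every branch the client outputs one of the received states; consequently the i.i.d.\ attack behaves exactly as in the simple case and no separable-state term $\chi$ appears. This is the only place where the single-client argument is genuinely simpler than the multi-client Theorem~\ref{thm:general_multi}, where one must instead lower-bound $\tfrac12\|\chi-\phi\|_1$ by a constant.

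First I would fix the family of attacks $\{\V{A}_\rho\}_{\rho\in D(\Cc^n)}$ in which the malicious source answers every state request of the honest client with a fresh copy of $\rho$ while the client runs $\pi_C$ unchanged; because the attack is i.i.d., the distribution $p(r,i)$ enters only through expectations and, on acceptance, the output register carries $\rho$. For the honest source I would apply the contraction $\M=\Tr\oplus\id$ exactly as in~\eqref{eq:simple_hon_measbound}--\eqref{eq:bound_simple_honest} to reduce to the classical accept/reject statistic, obtaining $\hdia{\pi(\bar a)}{\sharp_S(\S^{QSV}_{\phi,1})}\geq 1-A(\phi)$. For the dishonest source, the simulator $\V{B}_\rho$ can only feed a single classical bit into $\S^{QSV}_{\phi,1}$ (its output leg is forced to be the identity, the client being honest), so applying $\M_D=\{\gamma(0)\oplus 1,\gamma(1)\oplus 0\}$ and invoking Holevo--Helstrom as in~\eqref{eq:simple_dishon_meas1}---taking $\gamma(1)=\mathbbm{1}$ to define the residual $\delta$ and then optimizing $\gamma$ over the best $\rho$-versus-$\phi$ distinguisher---yields $\hdia{\V{A}_\rho(\bar a)}{\V{B}_\rho(\S^{QSV}_{\phi,1})}\geq \tfrac12 A(\rho)\cdot\tfrac12\|\rho-\phi\|_1$, the exact analogue of~\eqref{eq:bound_simple_dishonest}.

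Adding the two estimates and writing $t=\tfrac12\|\rho-\phi\|_1\in[0,1]$, I would factor out $\tfrac{t}{2}$ via $1-A(\phi)\geq \tfrac{t}{2}(1-A(\phi))$ to reach $\tfrac{t}{2}(1-A(\phi)+A(\rho))$, then control $A(\phi)-A(\rho)$ term by term: Holevo--Helstrom gives $\braket{\mu_{r,i}(0)}{\phi^{\otimes i}}-\braket{\mu_{r,i}(0)}{\rho^{\otimes i}}\leq \tfrac12\|\phi^{\otimes i}-\rho^{\otimes i}\|_1$, so $A(\phi)-A(\rho)\leq \sum_{r,i}p(r,i)\,\tfrac12\|\phi^{\otimes i}-\rho^{\otimes i}\|_1$. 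The new ingredient relative to the fixed-round case is handling this average. In the mixed case Lemma~\ref{lemma:Multi_copy_dist} with Bernoulli's inequality $(1-2t)^i\geq 1-2ti$ gives $\tfrac12\|\phi^{\otimes i}-\rho^{\otimes i}\|_1\leq\sqrt{2ti}$, and in the pure case~\eqref{eq:pureState} with $\tau=\sqrt{1-|\braket{\psi}{\phi}|^2}=t$ gives $\sqrt{1-(1-\tau^2)^i}\leq \tau\sqrt{i}$. Concavity of $\sqrt{\cdot}$ then collapses the expectation, $\sum_{r,i}p(r,i)\sqrt{i}\leq\sqrt{\mathbb{E}_p[i]}\leq\sqrt{\mathbb{E}_p[r]}=\sqrt{N}$, so the sum is at least $\tfrac{t}{2}(1-\sqrt{2tN})$ in the mixed case and $\tfrac{\tau}{2}(1-\tau\sqrt{N})$ in the pure case.

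Finally I would optimize the single free parameter: setting $\tfrac12\|\rho-\phi\|_1=\tfrac{2}{9N}$ (i.e.\ $\alpha=\tfrac29$) gives the mixed bound $\tfrac{1}{27N}$, while $\tau=\tfrac{1}{2\sqrt{N}}$ gives the pure bound $\tfrac{1}{8\sqrt{N}}$, reproducing~\eqref{eq:fin_bound_mixed} and~\eqref{eq:fin_bound_pure}. I expect the main obstacle to be not any single inequality but the bookkeeping of the attack inside the comb/attack-model framework when the number of rounds is itself random: one must check that the i.i.d.\ source is an admissible element of $\A(\pi)$ no matter how the client samples $(r,i)$ and signals through $\VV$, that the simulator's restriction to one classical bit is forced by the type of $\id[\S^{QSV}_{\phi,1}]$, and that a state $\rho$ at the prescribed distance $t$ from $\phi$ exists (immediate for pure $\phi$ via a nearby unit vector, and for mixed $\phi$ by perturbing within $D(\Cc^n)$). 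The Jensen step is precisely what licenses phrasing the bound in terms of the single quantity $N=\mathbb{E}_p[r]$.
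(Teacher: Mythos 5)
Your proposal is correct and takes essentially the same approach as the paper, whose proof of this theorem is simply the proof of Theorem~\ref{thm:general_multi} re-run with $p(l,l)=0$ (so that the separable-state term $\chi$ and the $\min$ defining $\lambda$ disappear, exactly as you observe): the same i.i.d.\ attack, the same contractions $\M$ and $\M_D$ with Holevo--Helstrom, the forced one-classical-bit simulator, Jensen's inequality to pass to the expected round number $N$, and the identical parameter choices $\frac{1}{2}\|\rho-\phi\|_1=\frac{2}{9N}$ (mixed) and $\tau=\frac{1}{2\sqrt{N}}$ (pure). The only cosmetic difference is that you apply Bernoulli's inequality before Jensen (reducing the average to concavity of $\sqrt{\cdot}$ via $\sqrt{2ti}$ and $\tau\sqrt{i}$), whereas the paper applies Jensen directly to the concave map $X\mapsto\sqrt{1-a^{X}}$ and invokes Bernoulli afterwards.
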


\begin{remark}[Categorical modeling]
Our categorical modeling is more restrictive than necessary. Indeed, in the single-client case, the no-state-preparation assumption is overly restrictive. It is enough to assume that the client cannot prepare states that are too close to the target state, otherwise a source would be superfluous. A reasonable assumption could be that the client can only prepare states that are outside a finite-size ball around the target state. Nevertheless, the categorical model remains simpler and more intuitive with stronger restrictions.

In the multi-client case, we only need to ensure that the target state is entangled with respect to a partition for which the clients cannot generate entanglement. This holds as long as there is a subset of clients which is not connected via quantum channels. If the clients are not connected via quantum channels they can only perform separable operations with respect to that particular partition. Since separable operations cannot create entanglement, we can be certain that any state the clients prepare will be far enough from any target state that is entangled with respect to the same partition. This restriction is sensible, as otherwise, an external source is superfluous.
\end{remark}

\section{Discussion}
\subsection{Summary}
In our work, we first present how to use the categorical composable cryptography framework for quantum cryptography. For that, we introduce a resource theory based on n-combs on \CPTP. The instantiation of the framework we presented can contribute to a deeper understanding of composable quantum cryptography as it defines protocols, resources and attacks rigorously while still being applicable without additional effort.\\
Using this instantiation of the framework, we prove that quantum state verification can not be efficient and secure if one relies on the usual cut-and-choose technique, i.e. uses one of the rounds directly as output. Indeed we show that in the usual cut-and-chose regime a quantum state verification protocol is either to far from the ideal quantum state verification resource and therefore insecure or it is inefficient in the number of rounds. Our result is agnostic about the target state, the number of clients, and used resources, except for a few reasonable restrictions. These restrictions should only prevent the clients from preparing the target state themselves and are the motivation to use such a protocol in the first place. 
Although we only consider quantum state verification for our results, one finds direct implications for other primitives. One example is self-testing, in which a party prepares states with an untrusted device and measures them with a different untrusted device to verify the preparation. It is easy to see that this is even harder than quantum state verification as the measurement device is not trustworthy in self-testing, i.e. our result extends naturally to self-testing.\\
\subsection{Discussion of the assumptions}
The strength and generality of our results stems from the fact that we only use very few and simple assumptions. A fundamental assumption is the inability of the clients to prepare the state themselves. While we argue that this assumption comes naturally in the setting of quantum state verification, we modeled the categorical representation of this assumption stricter than necessary. For the proof to work, we only need that the client can not prepare states that are far enough from the target state, especially since we consider the asymptotic behavior. However, this restriction is complicated to model in a category for the client, which motivates the stricter modeling. We leave it open to future work to find less restrictive categories which implement the assumption.\\[1em] 
Another assumption is that the clients output the state as received. While this seems to be the natural approach for verification, our work shows that it fails. In fact, questioning this assumption might lead to a workaround, which we discuss in more detail in the next section.\\[1em] 
At last, one might be tempted to see the framework we used as an assumption. Because of that, we emphasize that one can find the same lower bounds for implementations of quantum state verification in other composability frameworks\cite{Unruh10,MauRen11,Mau11}. This fact is already reflected by how we present the proof. Indeed, measuring the output and input choice would be part of a distinguisher in other frameworks, such as abstract cryptography. The simulator would implement the attack $\mathbf{B}_{\rho}$ on the ideal resource but it would have to obey the same restrictions regarding the input of the ideal resource as the attack. In the end, the inequalities are the same.
However, the explicit and strict typing of the categorical composable cryptography framework allows for rigorous proofs without significant overhead once the user understands the framework. Further, the flexibility of choosing the appropriate attack model enables the user to analyse more restricted or complicated adversarial situations such as honest-but-curios or non-colluding adversaries. Our proof provides an example of this flexibility: One could restrict the attack model to i.i.d. attacks and still find the same result.
\subsection{Possible workarounds and open questions}
The lower bounds we presented are an inherent property of quantum state verification in a cut-and-choose fashion. They raise the question of how to circumvent this lower bound and what consequences follow. First, one should recall the implication of the result: One can not use quantum state verification in a modular manner for cryptography as one can not have efficiency and security. This no-go result holds not only for composable cryptography; in the recent work \cite{colisson2024graph}, the authors show that stand-alone secure protocols for quantum state verification are composable secure, where the $\varepsilon$ for composable security is a polynomial of the one for the stand-alone security definition. This lifting implies that our result extends to stand-alone security as well. Either way, the implication is about the modular use or as a protocol for its own sake. However, most times verification is used in the context of a larger protocol, which raises the question of what happens in a non-modular setting. We investigated this question to some extent by post-composing the ideal resource with different kinds of channels on the client side. We found that the lower bounds similarly extend to post-composition with unital channels and measurements in a basis. We present this result in more detail in the Appendix \ref{app1}. 
The idea of post-composition also leads to other approaches. One of these is error-detection: If the server has to prepare the target state in an error-detection code and the clients run the verification on the encoding, they can decode the output and eliminate or detect errors introduced by a dishonest server. Similar techniques are already used in verifiable delegated quantum computing \cite{Kashefi_2017}, which already indicates that other primitives using verification in quantum cryptography could be affected by similar lower bounds. 
So, while we show that the naive approach to quantum state verification is doomed to fail, many open questions remain, and possible workarounds may exist.

\bibliography{Bibliography}
\appendix
\section{Appendix}
\subsection{Detailed proof for general protocols}\label{app2}
Before we restate the theorem, we note that we'll encounter expectation values of function of the following type in the proof, for $0 < a< 1$
    \begin{align}
        f_{a}(X) = \sqrt{1-a^{X}}.
    \end{align}
The functions $f_a$ are concave, since their second derivative is negative, indeed 
\begin{align}
    f''_a(Z) = \frac{d^2f}{dX^2}(Z) = -\frac{a^{Z} \ln\left(a\right)^{2} \left(2-a^{Z}\right)}{4 \left(\sqrt{1 - a^{Z}}\right)^{3}}<0.
\end{align}
 Since $f_a$ is concave, we can use Jensen's inequality for the expectation value of $f_{a}$ and we find
    \begin{align}
        E(f_{a}(X))\leq f_{a}(E(X)).
    \end{align}
\begin{theorem}[No efficient secure state verification for entangled states]\label{thm:general_multiApp}
    Let $\pi=\{\pi_S\}\cup\{\pi_i\}_{i\in C}$ be any protocol applied on resources $\bar a = (\T^{\times M},\VV^{\times M}, \W)$, and $D$ is an upper bound for the number of rounds. $\pi(\bar a)$ is an implementation with the following properties
    \begin{itemize}
        \item $K$ clients sample the number of rounds $r$ and the number of verification rounds $0\leq i\leq r$ from any joint distribution $p(r, i)$,
        \item the clients perform a measurement $\mu_{r, i}$ and accept the outcome if the result is $0$,
        \item if $r=i$, the clients prepare a state $\chi\in D\left(\bigotimes_{j=1}^K \Cc^{n_i}\right)$, where $\chi$ and the target state $\phi$ are separable respectively entangled with respect to a particular partition,
        \item  if $r\neq i$, the clients output one of the $r-i$ states drawn to any unspecified distribution.
    \end{itemize}
    It then follows that there exists $\V{A}_{\rho}\in\A(\pi)$ such that for all $\V{B}_{\rho}\in\A\left(\id[\S^{QSV}_{\phi, K}]\right)$ it holds that
    \begin{align}
  \hdia{RF(\pi)(\bar a)}{RF(\sharp_S)\left(\S^{QSV}_{\phi, K}\right)} + \hdia{RF(\V{A}_{\rho})(\bar a)}{ RF(\V{B}_{\rho})\left(\S^{QSV}_{\phi, K}\right)}\geq \varepsilon,
    \end{align}
    where $\varepsilon=\nicefrac{1}{27N}$, if $\phi$ is mixed and $\varepsilon=\nicefrac{1}{8\sqrt{N}}$, if $\phi$ is pure and $N$ is the expected number of rounds.
\end{theorem}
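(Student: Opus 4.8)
The plan is to rerun the proof of Theorem~\ref{thm:STSimple} with the fixed number of tested copies $N$ replaced by the random variable $i$ drawn from $p(r,i)$, so that every acceptance or output probability becomes an expectation $E[\,\cdot\,]$ over $p(r,i)$; the asymptotic quantity is $N=E[r]$, and since $i\leq r$ we always have $E[i]\leq N$. As before, the malicious family $\{\V{A}_\rho\}$ feeds independent copies of a single state $\rho$ into each use of the quantum channel and then runs the honest client protocol, while any simulator $\V{B}_\rho$ on $\id[\S^{QSV}_{\phi,K}]$ can only hand the ideal resource a classical bit and thus accepts with some probability $p_N(\rho)$.

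First I would treat the honest source by applying the channel $\M=\Tr\oplus\id$ from~(\ref{eq:simple_hon_measbound}) to the output register. Because an honest source distributes $\phi$ and the filtered ideal resource always accepts, contractivity of the diamond distance together with Holevo--Helstrom yields $\hdia{RF(\pi)(\bar a)}{RF(\sharp_S)\left(\S^{QSV}_{\phi,K}\right)}\geq 1-E[\braket{\mu_{r,i}(0)}{\phi^{\otimes i}}]$. For the dishonest source I apply $\M_D=\{\gamma(0)\oplus 1,\gamma(1)\oplus 0\}$ as in~(\ref{eq:simple_dishon_meas1}). The genuinely new feature is that on an accepting run the output is $\rho$ when $r\neq i$ but is the clients' own state $\chi$ when $r=i$, so the real ``accept-and-$\gamma(1)$'' probability splits into a part weighted by $\braket{\gamma(1)}{\rho}$ and a part weighted by $\braket{\gamma(1)}{\chi}$, against the ideal contribution $\braket{\gamma(1)}{\phi}\,p_N(\rho)$. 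Taking $\gamma(1)=\mathbbm{1}$ first isolates the slack $\delta$ between $p_N(\rho)$ and the total acceptance probability and bounds $|\delta|\leq\hdia{RF(\V{A}_\rho)(\bar a)}{RF(\V{B}_\rho)\left(\S^{QSV}_{\phi,K}\right)}$, exactly as from~(\ref{eq:simple_dishon_meas1}) to~(\ref{eq:bound_simple_dishonest}).

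The crucial choice is then to pick $\rho$ along the segment from $\phi$ towards $\chi$ and to let $\gamma$ be the optimal projector distinguishing $\rho$ from $\phi$. Because $\rho-\phi$ and $\chi-\phi$ then point in the same direction, $\gamma(1)$ witnesses $\braket{\gamma(1)}{\rho}-\braket{\gamma(1)}{\phi}=\tfrac12\|\rho-\phi\|_1\geq 0$ and simultaneously $\braket{\gamma(1)}{\chi}-\braket{\gamma(1)}{\phi}=\tfrac12\|\chi-\phi\|_1\geq 0$, which is exactly the sign condition $\min(\braket{\gamma(1)}{\chi}-\braket{\gamma(1)}{\phi},\braket{\gamma(1)}{\rho}-\braket{\gamma(1)}{\phi})\geq 0$ that makes the two accepted-output contributions reinforce rather than cancel. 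After reabsorbing $|\delta|$ into the dishonest distance, that distance is bounded below by a sum over $r\neq i$ carrying $\tfrac12\|\rho-\phi\|_1$ and over $r=i$ carrying $\tfrac12\|\chi-\phi\|_1$. Here I use that the clients cannot generate entanglement across the chosen partition, so $\chi$ is separable while $\phi$ is entangled, whence $\tfrac12\|\chi-\phi\|_1$ is a fixed positive constant; asymptotically it exceeds the tuned gap $\tfrac12\|\rho-\phi\|_1$, so the $r=i$ branch only strengthens the bound and the two branches merge into $\tfrac14\|\rho-\phi\|_1\,E[\braket{\mu_{r,i}(0)}{\rho^{\otimes i}}]$ after the constant-gap substitution.

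Adding the two honesty settings reproduces the analogue of~(\ref{eq:boundOfDist}), now with $E[\braket{\mu_{r,i}(0)}{\phi^{\otimes i}}-\braket{\mu_{r,i}(0)}{\rho^{\otimes i}}]$ in place of the fixed-$N$ difference. Bounding this expectation by $E[\tfrac12\|\rho^{\otimes i}-\phi^{\otimes i}\|_1]$ and invoking Lemma~\ref{lemma:Multi_copy_dist} in the mixed case, or~(\ref{eq:pureState}) in the pure case, expresses it through the concave functions $f_a(i)=\sqrt{1-a^{\,i}}$ from the appendix preamble. The decisive step is Jensen's inequality: concavity together with $E[i]\leq E[r]=N$ and the monotonicity of $f_a$ give $E[f_a(i)]\leq f_a(E[i])\leq f_a(N)$, collapsing the random-round expression into the fixed-round one. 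From there I optimise the attacker's distance exactly as in Theorem~\ref{thm:STSimple}, setting $\tfrac12\|\rho-\phi\|_1=\nicefrac{\alpha}{N}$ with $\alpha=\nicefrac29$ in the mixed case and $\tau=\nicefrac{1}{2\sqrt N}$ in the pure case, to recover $\varepsilon=\nicefrac{1}{27N}$ and $\varepsilon=\nicefrac{1}{8\sqrt N}$. The step I expect to be the main obstacle is precisely this dishonest analysis with two output cases: bookkeeping the $\rho$ and $\chi$ contributions, verifying the direction claim so they add constructively, and pushing the randomness through Jensen in the correct monotone direction via $E[i]\leq N$; the remainder is a mechanical transcription of the fixed-round computation.
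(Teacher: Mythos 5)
Your proposal follows the paper's own proof almost step for step: the honest-source bound via the measurement channel $\M$, the i.i.d.\ attack family $\{\V{A}_{\rho}\}$ against which any simulator $\V{B}_{\rho}$ can only feed a classical bit into $\S^{QSV}_{\phi,K}$, the elimination of the slack $\delta$ by first taking $\gamma(1)=\mathbbm{1}$, the sign condition that makes the $r\neq i$ (output $\rho$) and $r=i$ (output $\chi$) branches add constructively, and the final combination of the two honesty settings with Jensen's inequality and the tuning of $\|\rho-\phi\|_1$. Your one real departure is a good one: where the paper merely asserts that ``we can choose the direction of $\rho$ and $\gamma(1)$'' so that $\braket{\gamma(1)}{\chi}\geq\braket{\gamma(1)}{\phi}\leq\braket{\gamma(1)}{\rho}$, you realize this concretely by taking $\rho=(1-t)\phi+t\chi$ on the segment towards $\chi$, so that $\rho-\phi$ and $\chi-\phi$ are positive multiples of one another and a single Helstrom projector witnesses both gaps at once. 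For the \emph{mixed} case this is correct and arguably more rigorous than the paper: the tuning $\frac{1}{2}\|\rho-\phi\|_1=t\cdot\frac{1}{2}\|\chi-\phi\|_1=\nicefrac{2}{9N}$ is achievable with $t\in(0,1)$ for large $N$ precisely because $\frac{1}{2}\|\chi-\phi\|_1$ is a positive constant (separable vs.\ entangled across the partition), and the rest of your bookkeeping reproduces $\nicefrac{1}{27N}$.

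The \emph{pure} case as you describe it, however, has a genuine gap. A strict convex combination $(1-t)\phi+t\chi$ with $0<t<1$ and $\chi\neq\phi$ is never pure, so your segment choice is incompatible with invoking (\ref{eq:pureState}), which is an identity for two pure states; and Lemma \ref{lemma:Multi_copy_dist} cannot rescue the $\nicefrac{1}{8\sqrt{N}}$ rate for a mixed $\rho$. Indeed, with $\frac{1}{2}\|\rho-\phi\|_1=\tau=\nicefrac{1}{2\sqrt{N}}$ the lemma only yields $\frac{1}{2}\|\rho^{\otimes i}-\phi^{\otimes i}\|_1\leq\sqrt{1-(1-2\tau)^{i}}$, and since $(1-\nicefrac{1}{\sqrt{N}})^{N}\to 0$ the factor $1-\sqrt{1-(1-2\tau)^{N}}$ vanishes, leaving nothing better than the mixed-state $\nicefrac{1}{27N}$ bound; the $\nicefrac{1}{\sqrt{N}}$ scaling genuinely requires the quadratic per-copy behaviour $\sqrt{1-|\braket{\psi}{\phi}|^{2N}}$ that only a pure $\rho=\ketbra{\psi}$ provides. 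The paper avoids this by choosing a pure $\rho$ in the pure case and keeping the direction condition as a separate (admittedly terse) claim. To repair your argument you must decouple the two requirements: pick $\ket{\psi}$ pure at the tuned angle and then establish $\braket{\gamma(1)}{\chi}\geq\braket{\gamma(1)}{\phi}$ directly, e.g.\ by exploiting the freedom in the Helstrom measurement (any $\gamma(1)$ between the positive-part projector of $\ketbra{\psi}-\ketbra{\phi}$ and its sum with the projector onto the kernel is optimal) together with the freedom in the direction of $\psi$'s deviation from $\phi$. This is exactly the point where the $r=i$ branch could otherwise cancel the bound, so it cannot be waved through.
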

\begin{proof}
    First we consider correctness, i.e. assume the source is honest. Again, we can bound the diamond-distance by composing with a $\M = Tr_{\bigotimes_{i\in C}M_{n_i}(\Cc)}\otimes \id[I]$ and find
    \begin{align}
        \hdia{RF(\pi)(\bar a)}{RF(\sharp_S)\left(\S^{QSV}_{\phi, K}\right)} \geq 1 - \sum_{r=0}^{\infty}\sum_{i=0}^rp(r, i)\braket{\mu_{r, i}(0)}{\phi^{\otimes i}}. 
    \end{align}
    If the source is dishonest, we consider a family of attacks $\{\V{A}_{\rho}\}_{\rho\in D\left(\bigotimes_{i\in C}\Cc^{n_i}\right)}$ which prepares and inputs for every query the state $\rho$ on the source's side and implements $\pi_C$ as the clients are considered to be honest. As $\dom(\V{A}_{\rho}(\bar a)) = I$, the domain of any suitable attack $\V{B}_{\rho}\in \A\left(\id[\S^{QSV}_{\phi, K}]\right)$ must have the same domain, i.e. prepares and inputs a binary distribution $\{q(\rho),1-q(\rho)\}$, inputs this at the source's interface and acts as the identity on the client's side. 
    Again, with $\M_D= \{\gamma(0)\oplus 1, \gamma(1) \oplus 0\}$ we find
    \begin{align}
        &Pr[\M_D\circ \V{A}_{\rho}(\bar a)(I)=1] \nnn&= \braket{\gamma(1)}{\rho}\sum_{r=1}^{\infty}\sum_{i=0}^{r-1}p(r, i)\braket{\mu_{r, i}(0)}{\rho^{\otimes i}} + \braket{\gamma(1)}{\chi}\sum_{r=0}^{\infty}p(r, r)\braket{\mu_{r, r}(0)}{\rho^{\otimes r}}\\
        &Pr[\M_D\circ \V{B}_{\rho}\left(\S^{QSV}_{\phi, K}\right)(I)=1] = \braket{\gamma(1)}{\phi} q(\rho)
    \end{align}
    With $\braket{\gamma(1)}{\phi} = \braket{\gamma(1)}{\rho} = \braket{\gamma(1)}{\chi}=1$ we find:
    \begin{align}
        \hdia{RF(\V{A}_{\rho})(\bar a)}{ RF(\V{B}_{\rho})\left(\S^{QSV}_{\phi, K}\right)} \geq& \left|\left(\sum_{r=0}^{\infty}\sum_{i=0}^rp(r, i)\braket{\mu_{r, i}(0)}{\rho^{\otimes i}}\right)-q(\rho)\right| = |\delta|,
    \end{align}
    with
    \begin{align}
        \delta &= \left(\sum_{r=0}^{\infty}\sum_{i=0}^rp(r, i)\braket{\mu_{r, i}(0)}{\rho^{\otimes i}}\right) - q(\rho).
    \end{align}
    With that, we find for any measurement $\gamma$
    \begin{align}
        \hdia{RF(\V{A}_{\rho})(\bar a)}{ RF(\V{B}_{\rho})\left(\S^{QSV}_{\phi, K}\right)} &\geq \left|\braket{\gamma(1)}{\rho}\sum_{r=1}^{\infty}\sum_{i=0}^{r-1}p(r, i)\braket{\mu_{r, i}(0)}{\rho^{\otimes i}} \right.\nnn
        &+\left. \braket{\gamma(1)}{\chi}\sum_{r=0}^{\infty}p(r, r)\braket{\mu_{r, r}(0)}{\rho^{\otimes r}}-\braket{\gamma(1)}{\phi} q(\rho) \right|\nnn
        &= \left|\left(\braket{\gamma(1)}{\rho}-\braket{\gamma(1)}{\phi}\right)\sum_{r=1}^{\infty}\sum_{i=0}^{r-1}p(r, i)\braket{\mu_{r, i}(0)}{\rho^{\otimes i}} \right.\nnn&+\left. \left(\braket{\gamma(1)}{\chi}-\braket{\gamma(1)}{\phi}\right)\sum_{r=0}^{\infty}p(r, r)\braket{\mu_{r, r}(0)}{\rho^{\otimes r}}+\braket{\gamma(1)}{\phi}(\delta)\right|\nnn
        &\geq \left|\left(\braket{\gamma(1)}{\rho}-\braket{\gamma(1)}{\phi}\right)\sum_{r=1}^{\infty}\sum_{i=0}^{r-1}p(r, i)\braket{\mu_{r, i}(0)}{\rho^{\otimes i}} \right.\nnn&+\left. \left(\braket{\gamma(1)}{\chi}-\braket{\gamma(1)}{\phi}\right)\sum_{r=0}^{\infty}p(r, r)\braket{\mu_{r, r}(0)}{\rho^{\otimes r}}\right|-\left|\delta\right|,\end{align}
        which implies
        \begin{align}
        &\hdia{RF(\V{A}_{\rho})(\bar a)}{ RF(\V{B}_{\rho})\left(\S^{QSV}_{\phi, K}\right)} \geq \frac{1}{2}\left|\left(\braket{\gamma(1)}{\rho}-\braket{\gamma(1)}{\phi}\right)\sum_{r=1}^{\infty}\sum_{i=0}^{r-1}p(r, i)\braket{\mu_{r, i}(0)}{\rho^{\otimes i}} \right.\nnn&+\left. \left(\braket{\gamma(1)}{\chi}-\braket{\gamma(1)}{\phi}\right)\sum_{r=0}^{\infty}p(r, r)\braket{\mu_{r, r}(0)}{\rho^{\otimes r}}\right|
    \end{align}
    We can choose the direction of $\rho$ and $\gamma(1)$ such that $\braket{\gamma(1)}{\chi}\geq \braket{\gamma(1)}{\phi} \leq \braket{\gamma(1)}{\rho}$ and define $\lambda = \min(\braket{\gamma(1)}{\chi}-\braket{\gamma(1)}{\phi}, \braket{\gamma(1)}{\rho}-\braket{\gamma(1)}{\phi})$:
    \begin{align}
        \hdia{RF(\V{A}_{\rho})(\bar a)}{ RF(\V{B}_{\rho})\left(\S^{QSV}_{\phi, K}\right)}&\geq \frac{\lambda}{2}\left|\sum_{r=1}^{\infty}\sum_{i=0}^{r-1}p(r, i)\braket{\mu_{r, i}(0)}{\rho^{\otimes i}} + \sum_{r=0}^{\infty}p(r, r)\braket{\mu_{r, r}(0)}{\rho^{\otimes r}}\right|\nnn
        & = \frac{\lambda}{2}\left|\sum_{r=0}^{\infty}\sum_{i=0}^rp(r, i)\braket{\mu_{r, i}(0)}{\rho^{\otimes i}}\right|.
    \end{align}
    Now, we again consider both honesty configurations together
    \begin{align}
        &\hdia{RF(\V{A}_{\rho})(\bar a)}{ RF(\V{B}_{\rho})\left(\S^{QSV}_{\phi, K}\right)}+ \hdia{RF(\pi)(\bar a)}{RF(\sharp_S)\left(\S^{QSV}_{\phi, K}\right)}\nnn&\geq \frac{\lambda}{2}\left|\sum_{r=0}^{\infty}\sum_{i=0}^rp(r, i)\braket{\mu_{r, i}(0)}{\rho^{\otimes i}}\right|+\left|1 - \sum_{r=0}^{\infty}\sum_{i=0}^rp(r, i)\braket{\mu_{r, i}(0)}{\phi^{\otimes i}}\right|\nnn
        &\geq \frac{\lambda}{2}\left|1+\sum_{r=0}^{\infty}\sum_{i=0}^rp(r, i)\left(\braket{\mu_{r, i}(0)}{\rho^{\otimes i}} - \braket{\mu_{r, i}(0)}{\phi^{\otimes i}}\right)\right|\nnn
        &\geq \frac{\lambda}{2}\left(1-\sum_{r=0}^{\infty}\sum_{i=0}^rp(r, i)\left|\braket{\mu_{r, i}(0)}{\rho^{\otimes i}} - \braket{\mu_{r, i}(0)}{\phi^{\otimes i}}\right|\right)\nnn
        &\geq \frac{\lambda}{2}\left(1-\sum_{r=0}^{\infty}\sum_{i=0}^rp(r, i)\|\rho^{\otimes i}-\phi^{\otimes i}\|_1\right).
    \end{align} 
    For mixed states we find with $\nicefrac{1}{2}\|\rho-\phi\|_1=\nicefrac{2}{9N}$
    \begin{align}
        &\hdia{RF(\V{A}_{\rho})(\bar a)}{ RF(\V{B}_{\rho})\left(\S^{QSV}_{\phi, K}\right)}+ \hdia{RF(\pi)(\bar a)}{RF(\sharp_S)\left(\S^{QSV}_{\phi, K}\right)}\nnn &\geq\frac{\lambda}{2}\left(1-\sum_{r=0}^{\infty}\sum_{i=0}^rp(r, i)\sqrt{1-(1-\nicefrac{4}{9N})^{i}}\right)\nnn
        &\geq \frac{\lambda}{2}\left(1-\sum_{r=0}^{\infty}\sqrt{1-(1-\nicefrac{4}{9N})^r}\left(\sum_{i=0}^rp(r, i)\right)\right)\geq \frac{\lambda}{2}\left(1-\sqrt{1-(1-\nicefrac{4}{9N})^{N}}\right),
    \end{align}
    where we used Jensen's inequality with $N$ being the average number of rounds. If $N$ is large enough and the clients are not able to prepare $\phi$, we find the same expression as in the proof of Thm. \ref{thm:STSimple} and find $\hdiail{\V{A}_{\rho}(\bar a)}{ \V{B}_{\rho}\left(\S^{QSV}_{\phi, K}\right)}+ \hdiail{\pi(\bar a)}{\sharp_S\left(\S^{QSV}_{\phi, K}\right)}\geq\nicefrac{1}{27N}$.\\[1em]
    For pure states, we use again a pure state for $\rho$, $\rho=\ketbra{\psi}$ and find if $N$ is large enough:
     \begin{align}
          &\hdia{RF(\V{A}_{\rho})(\bar a)}{ RF(\V{B}_{\rho})\left(\S^{QSV}_{\phi, K}\right)}+ \hdia{RF(\pi)(\bar a)}{RF(\sharp_S)\left(\S^{QSV}_{\phi, K}\right)}\nnn
         &\geq\frac{\sqrt{1-|\braket{\psi}{\phi}|^2}}{2}\left(1-\sum_{r=0}^{\infty}\sum_{i=0}^rp(r, i)\sqrt{1-\sqrt{1-|\braket{\psi}{\phi}|^{2i}}}\right)\nnn
        &\geq\frac{\sqrt{1-|\braket{\psi}{\phi}|^2}}{2}\left(1-\sum_{r=0}^{\infty}\sqrt{1-\sqrt{1-|\braket{\psi}{\phi}|^{2r}}}\sum_{i=0}^rp(r, i)\right)\nnn&\geq \frac{\sqrt{1-|\braket{\psi}{\phi}|^2}}{2}\left(1-\sqrt{1-|\braket{\psi}{\phi}|^{2N}}\right), \label{eq:mostGen_pure}
    \end{align}
    i.e. we find again $\hdia{RF(\V{A}_{\rho})(\bar a)}{ RF(\V{B}_{\rho})\left(\S^{QSV}_{\phi, K}\right)}+ \hdia{RF(\pi)(\bar a)}{RF(\sharp_S)\left(\S^{QSV}_{\phi, K}\right)}\geq\nicefrac{1}{8\sqrt{N}}$\qed
\end{proof}
\subsection{Extending to post-composition with channels}\label{app1}
We also need to consider the situation where the clients try to overcome or at least improve the no-go result. The clients could do so by applying a channel either before or after the verification.
Since for any channel $\phi$ it holds that
    \begin{align}
        \frac{1}{2}\|\rho-\phi\|_1\geq \frac{1}{2}\|\Lambda(\rho)-\Lambda(\phi)\|_1,
    \end{align}
 precomposing with a channel is of no help, as it could decrease the clients' chance of catching the source cheating. However, applying a channel after accepting the state could yield a good implementation as it cannot increase the distinguishing advantage. For a channel $\Lambda$ performed on the output of $\VV^f_{\phi}$ in the case of no abort, we denote the resulting (ideal) resource $\Lambda\circ\VV^f_{\phi}$. The question is then whether there are channels that can either improve the lower bound or avoid the no-go result all together.
 
Different channels lead to vastly different results. In fact, the analysis in the previous section breaks down completely for some channels. For example, consider a replacement channel:
    \begin{align}
        \Lambda^{\chi}_{repl}(\rho) = Tr(\rho)\chi.
    \end{align}
    We find that $\frac{1}{2}\|\Lambda^{\chi}_{repl}(\rho)-\Lambda^{\chi}_{repl}(\phi)\|_1 =\frac{1}{2}\|\chi-\chi\|_1=0$, i.e. there is no chance a distinguisher could distinguish the implementation and the ideal resource. However, replacement channels are not interesting because they imply that the clients were able to prepare the desired state in the first place, making the source and therefore the verification obsolete.
    
    Since the trace distance is unitarily invariant, the no-go result is upheld under post-composition with unitary channels. In the following we take a closer look at measurement and unital channels.

\subsubsection{Measurement channels}
    We consider the scenario where the clients, after accepting the verification, measure the state and output the outcome. Let the same happen in the ideal setting. Can this be composably secure with negligible distinguishing advantage?     
    The verification works as in the general case. We just need to specify how the distinguisher could distinguish the outputs. We use the general setting described in section~\ref{section:general_protocol}, but with $p(i,i)=0$ for all $i$. Let $d>2$ be the dimension of the output space, then we denote the measurement channel as 
    \begin{align}
        \M(\rho) = \sum_{j=1}^d\ketbra{j}{\xi_j}\rho\ketbra{\xi_j}{j}, 
    \end{align}
    The distinguisher now fixes one $\Tilde{j}$ and only outputs $1$ if the measurement outcome is $\Tilde{j}$.
   We restrict our analysis to pure states. For a large enough $N$, we have that there is an attack $\V{A}_{\psi}\in\A(\pi')$ such that for every attack $\V{B}_{\psi}\in\A(\id[\S^{QSV}_{\phi, K}])$ holds (\ref{eq:mostGen_pure})
\begin{align}\label{eq:lower_bound_measurement}
     &\hdia{RF(\V{A}_{\psi})(\bar a)}{RF(\V{B}_{\psi})\left(\S^{QSV}_{\phi, K}\right)}+ \hdia{\pi'(\bar a)}{\sharp_S\left(\S^{QSV}_{\phi, K}\right)}\nnn &\geq\frac{\sqrt{1-|\braket{\psi}{\phi}|^2}}{2}\left(1-\sqrt{1-|\braket{\psi}{\phi}|^{2N}}\right).
=\end{align}
For our current setting the protocol $\pi$ then includes the measurement and if we set $\ket{\xi}=\ket{\xi_{\Tilde{j}}}$, we find that there is an attack $\V{A}_{\psi}\in\A(\pi)$ such that for every attack $\V{B}_{\psi}\in\A(\id[\M\circ \S^{QSV}_{\phi, K}])$
    \begin{align}
        &\hdia{RF(\V{A}_{\psi})(\bar a)}{RF(\V{B}_{\psi})\left(\M\circ\S^{QSV}_{\phi, K}\right)}+ \hdia{RF(\pi)(\bar a)}{RF(\sharp_S)\left(\M \circ\S^{QSV}_{\phi, K}\right)}\nnn &\geq\frac{\left||\braket{\xi}{\psi}|^2-|\braket{\xi}{\phi}|^2\right|}{2}\left(1-\sqrt{1-|\braket{\psi}{\phi}|^{2N}}\right).\label{eq:noBasisMeas}
    \end{align}
 We choose $\ket{\xi}$ such that $\nicefrac{1}{\sqrt{2}}>\nicefrac{1}{\sqrt{d}}\geq |\braket{\xi}{\phi}|\geq 0$. This is possible since we assume $d\geq3$. 
 We now choose $\nicefrac{\pi}{2}>\eta\geq0$ and $\nicefrac{\pi}{2}>\theta\geq0$ such that $\cos(\theta) = |\braket{\xi}{\phi}|$ and $\cos(\eta)=|\braket{\psi}{\phi}|$. Next, we fix $\ket{\psi}$ to be in the plane spanned by $\ket{\phi}$ and $\ket{\xi}$. We define the following basis
    \begin{align}
         \ket{b_0} &= \ket{\phi}\\
        \ket{b_1} &= \frac{\ket{\xi}-\braket{\phi}{\xi}\ket{\phi}}{\sqrt{1-|\braket{\phi}{\xi}|^2}} = \frac{\ket{\xi}-\braket{\phi}{\xi}\ket{\phi}}{\sqrt{1-\cos(\theta)^2}}.
    \end{align}
   With $\braket{\xi}{\phi} = \cos(\theta)e^{i\alpha}$ we can express $\ket{\psi}$ in the basis as
    \begin{align}
        \ket{\psi} = \cos(\eta)e^{-i\alpha}\ket{b_0} + \sin(\eta)\ket{b_1},
    \end{align}
     With that we find
    \begin{align}
         &\braket{\xi}{\psi}=\cos(\eta)\cos(\theta) + \sin(\eta)\frac{1-\cos(\theta)^2}{\sqrt{1-\cos(\theta)^2}}\\
         =&\cos(\eta)\cos(\theta)+\sin(\eta)\sin(\theta) = \cos(\theta-\eta). 
    \end{align}
  We can now obtain a lower bound on $\left||\braket{\xi}{\psi}|^2-|\braket{\xi}{\phi}|^2\right|$ as follows
     \begin{align}
         &\left||\braket{\xi}{\psi}|^2-|\braket{\xi}{\phi}|^2\right|=\left|\cos(\theta-\eta)^2-\cos(\theta)^2\right|\\
         = &\left|\frac{e^{2i\theta-2i\eta}+2+e^{2i\eta-2i\theta}}{4}-\frac{e^{2i\theta}+2+e^{-2i\theta}}{4}\right|\\
    =&\left|\frac{e^{2i\theta-2i\eta}+e^{2i\eta-2i\theta}-e^{2i\theta}-e^{-2i\theta}}{4}\right|\\
 =&\left|\frac{\left(e^{-2i\eta}-1\right)e^{2i\theta}+\left(e^{2i\eta}-1\right)e^{-2i\theta}}{4}\right| \\
     =&\left|\frac{\left(e^{-i\eta}-e^{i\eta}\right)e^{2i\theta-i\eta}+\left(e^{i\eta}-e^{-i\eta}\right)e^{-2i\theta+i\eta}}{4}\right|\\
     =&\left|\frac{\left(e^{-i\eta}-e^{i\eta}\right)\left(e^{i(2\theta-\eta)}-e^{-i(2\theta+\eta)}\right)}{4}\right|\\
    =&\left|\sin(\eta)\sin(2\theta-\eta)\right| = |\sin(\eta)||\sin(2\theta-\eta)|
     \end{align}
    From $0\leq \cos(\theta)<\nicefrac{1}{\sqrt{2}}$ it follows that $\nicefrac{\pi}{2}>\theta>\nicefrac{\pi}{4}$ and we choose $\eta$ such that $\sin(\eta)=\frac{1}{2\sqrt{N}}$ and $\sin(2\theta-\eta)\geq\sin(\eta)$. Using \ref{eq:noBasisMeas}
     \begin{align}
         &\hdia{RF(\V{A}_{\psi})(\bar a)}{RF(\V{B}_{\psi})\left(\M\circ\S^{QSV}_{\phi, K}\right)}+ \hdia{RF(\pi)(\bar a)}{RF(\sharp_S)\left(\M \circ\S^{QSV}_{\phi, K}\right)}\nnn&\geq \frac{\sin^2(\eta)}{2}\left(1-\sqrt{1-\cos(\eta)^{2N}}\right)\nnn
         &=\frac{\sin^2(\eta)}{2}\left(1-\sqrt{1-(1-\sin(\eta)^2)^{N}}\right)\geq \frac{1}{16N}.
     \end{align}
    \subsubsection{Unital channels}
    A channel is unital when it preserves the identity. That is, for a space $M_a(\Cc)$ a unital channel $\Lambda$ maps $M_a(\Cc)$ to $M_a(\Cc)$ and it holds that
      \begin{align}
          \Lambda(\mathbbm{1}_{M_a(\Cc)}) = \mathbbm{1}_{M_a(\Cc)}.
      \end{align}
    Again, we consider a setting as described in section~\ref{section:general_protocol}  with $p(i, i)=0$ where we post-compose the protocol with a unital channel $\Lambda$. We need to find a lower bound on the distinguishing advantage after applying the unital channel. Using the bound in Theorem \ref{thm:general_multi} for mixed states, because $\Lambda(\phi)$ might be mixed even when $\phi$ is pure, we find for arbitrary $\rho$   \begin{align}\label{eq:post_compose_unital_bound}
        \hdia{RF(\V{A}_{\rho})(\bar a)}{RF(\V{B}_{\rho})(\Lambda\circ \S^{QSV}_{\phi, K})}+ \hdia{RF(\pi)(\bar a)}{RF(\sharp_S)\left(\Lambda\circ\S^{QSV}_{\phi, K}\right)}\geq \nnn\frac{\|\Lambda(\rho)-\Lambda(\phi)\|_1}{4}\left(1-\sqrt{1-(1-\|\rho-\phi\|_1)^{N}}\right).
     \end{align}
    We can freely choose $\rho$, and set the following:
    \begin{align}
        \rho = \alpha\phi + (1-\alpha)\frac{\mathbbm{1}-\phi}{d-1}.
     \end{align}
    And we  find
    \begin{align}
        &\left\|\Lambda(\phi) - \Lambda(\rho)\right\|_1
        = \left\|\phi'-\left(\alpha \phi'+(1-\alpha)\frac{\mathbbm{1}-\phi'}{d-1}\right)\right\|_1\\
        = &\left\|(1-\alpha)\left(\phi'-\frac{\mathbbm{1}-\phi'}{d-1}\right)\right\|_1
        =(1-\alpha)\left\|\frac{(d-1)\phi'-\mathbbm{1}+\phi'}{d-1}\right\|_1\\
        = &(1-\alpha)\frac{d}{d-1}\left\|\phi'-\frac{\mathbbm{1}}{d}\right\|_1.
    \end{align}
    With $\beta = 1-\alpha$, we obtain
    \begin{align}
       \frac{1}{2}\|\Lambda(\phi)-\Lambda(\rho)\|_1= \frac{\beta d}{d-1}\frac{1}{2}\|\Lambda(\phi)-\nicefrac{\mathbbm{1}}{d}\|_1.
    \end{align}
    Similarly, the trace distance of the inputs is then
    \begin{align}
       \frac{1}{2}\|\rho-\phi\|_1 = \frac{1}{2}\left\|\phi-\left(\alpha\phi+(1-\alpha)\frac{\mathbbm{1}-\phi}{d-1}\right)\right\|_1 = \frac{\beta d}{(d-1)}\frac{1}{2}\|\phi-\nicefrac{\mathbbm{1}}{d}\|_1.
    \end{align}
    We define
    \begin{align}
        \omega &= \frac{d}{d-1}\frac{1}{2}\|\phi-\mathbbm{1}/d\|_1,\\
        \omega' &= \frac{d}{d-1}\frac{1}{2}\|\Lambda(\phi)-\mathbbm{1}/d\|_1.
    \end{align}
We can then rewrite \ref{eq:post_compose_unital_bound} as follows
    \begin{align}
        &\hdia{RF(\V{A}_{\rho})(\bar a)}{RF(\V{B}_{\rho})(\Lambda\circ \S^{QSV}_{\phi, K})}+ \hdia{RF(\pi)(\bar a)}{RF(\sharp_S)\left(\Lambda\circ\S^{QSV}_{\phi, K}\right)}\nnn &\geq \frac{\omega'\beta}{2}\left(1-\sqrt{1-(1-2\omega\beta)^{N}}\right).
    \end{align}
    We set $\beta = \frac{1}{2\omega N}$, and we can assume that $N\geq \frac{1}{2\omega N}$, since $\phi$ does not depend on the number of rounds. Using $(1-\frac{\beta}{k})^{k}\geq 1-\beta$ for $k\in\mathbb{N}$ and $|\beta|\leq k$ we find:
    \begin{align}
        &\hdia{RF(\V{A}_{\rho})(\bar a)}{RF(\V{B}_{\rho})(\Lambda\circ \S^{QSV}_{\phi, K})}+ \hdia{RF(\pi)(\bar a)}{RF(\sharp_S)\left(\Lambda\circ\S^{QSV}_{\phi, K}\right)}\nnn
        \geq &\frac{\omega'}{2(2\omega N)}\left(1-\sqrt{1-\left(1-\frac{1}{N}\right)^{N}}\right)
        \geq \frac{\omega'}{4\omega N}.
    \end{align}
\end{document}